 \definecolor{BLACK}{gray}{0}
 \definecolor{WHITE}{gray}{1}
 \definecolor{RED}{rgb}{1,0,0}
 \definecolor{GREEN}{rgb}{0,1,0}
 \definecolor{BLUE}{rgb}{0,0,1}
 \definecolor{CYAN}{cmyk}{1,0,0,0}
 \definecolor{MAGENTA}{cmyk}{0,1,0,0}
 \definecolor{YELLOW}{cmyk}{0,0,1,0}
\newenvironment{protocol*}[1]
  {
    \begin{center}
      \hrulefill\\
      \textbf{#1}
  }
  {
    \vspace{-1\baselineskip}
    \hrulefill
    \end{center}
  }
\theoremstyle{plain}
\theoremstyle{definition}
\theoremstyle{remark}
\newtheorem{lemma}{Lemma}  
\def\bel{\begin{lemma}}
\def\eel{\end{lemma}}
\newtheorem{theorem}{Theorem}
\newtheorem{proposition}[theorem]{Proposition}
\newtheorem{remark}[theorem]{Remark}
\newtheorem{definition}[theorem]{Definition}
\begin{document}

\title{Finite Device-Independent Extraction of a Block Min-Entropy Source against Quantum Adversaries}
\author{Ravishankar Ramanathan}
\email{ravi@cs.hku.hk}
\affiliation{Department of Computer Science, The University of Hong Kong, Pokfulam Road, Hong Kong}


\begin{abstract}
The extraction of randomness from weakly random seeds is a problem of central importance with multiple applications. In the device-independent setting, this problem of quantum randomness amplification has been mainly restricted to specific weak sources of Santha-Vazirani type, while extraction from the general min-entropy sources has required a large number of separated devices which is impractical. In this paper, we present a device-independent protocol for amplification of a single min-entropy source (consisting of two blocks of sufficiently high min-entropy) using a device consisting of two spatially separated components and show a proof of its security against general quantum adversaries. 
\end{abstract}


\keywords{}

\maketitle

\section{Introduction}
The extraction of randomness from weak seeds is a topic of central importance given the utility of uniformly random bits in several cryptographic primitives as well as in randomized algorithms, physical simulations and gambling. In many applications, one requires private bits that are uniform given any side information possibly possessed by a malicious adversary. The amount of randomness in a weak seed is quantified by the min-entropy, or conditional quantum min-entropy when the adversarial side information is quantum. Specifically, an $(n, k)$ weak source is an $n$-bit string with conditional (quantum) min-entropy $k$, meaning that the probability with which an adversary can guess the source correctly is $\leq 2^{-k}$. One may also consider weak sources with more structure such as a Santha-Vazirani (SV) source \cite{SV84} wherein each bit produced by the source has some $\epsilon$ amount of randomness even conditioned on all previous bits (any SV source is also a min-entropy source however not all min-entropy sources are of SV type). The extraction of fully uniform bits from weak sources is the task termed randomness amplification or randomness extraction. 

A central well-known result is that randomness extraction is possible using classical resources only when two or more independent sources are available. In particular, with deterministic extraction using just a single min-entropy source, it is impossible to produce even one private near-uniform bit \cite{SV84}. However, independence is impossible to check or guarantee in practice, so that classical randomness extractors  make the strong assumption that independence holds in the given scenario. 

The discovery that quantum resources can help to weaken this assumption, even in a fully device-independent setting, provided a major fillip to the field \cite{CR12}. Device-Independent (DI) quantum cryptography offers the highest form of security, wherein the users (of the DI quantum random number generators) do not need to even trust the devices executing the cryptographic protocol, and can instead verify correctness and security by means of simple statistical tests on the devices. These statistical tests verify that the correlations in the devices violate a Bell inequality, and are thus genuinely quantum non-local correlations. Moreover, the adversarial side information may also be quantum in nature \cite{CSW14, KAF17, ADF+18} (or even more generally be only restricted by the no-signalling principle of relativity \cite{Barrett2005, CR12, GMTD+13, CSW16, BRGH+16, RBHH+16, HR19, RA12, RH14, GHH+14, rel-caus-2}).

Besides the practical interest in producing high-quality quantum-certified random bits, randomness amplification protocols are also of foundational interest in fundamental physics. This is because the existence of a physical process producing fully uniform bits starting from arbitrarily weak seeds may be regarded as the statement that assuming that nature obeys quantum-mechanical laws, the existence of an arbitrarily small amount of free-will implies the existence of complete freedom of choice. This was articulated as the dichotomy statement "either our world is fully deterministic or there exist in nature events that are fully random" in \cite{GMTD+13}. 

While the quantum randomness amplification of Satha-Vazirani(SV) sources using finite devices has been shown, the corresponding amplification problem of general min-entropy sources with finite devices has remained an important open question in Device-Independent Quantum Cryptography. In this paper, we present a device-independent protocol for amplification of a single min-entropy source (consisting of two blocks of sufficiently high min-entropy) using a device consisting of two spatially separated components and show a proof of its security against general quantum adversaries.

\section{Background}

Colbeck and Renner \cite{CR12} were the first to show a device-independent protocol for the amplification of randomness using quantum resources. Their protocol had the nice features of amplifying weak public sources using a device consisting of the minimum number of two separated components as well as being secure against general no-signalling side information. However, it also had a few drawbacks including being applicable only to SV sources (and only those SV sources where each bit was at most $\approx 0.058$-away from uniform), requiring a large number of measurements (growing with the security parameter), tolerating a vanishing rate of noise and producing vanishing extraction rate. Since then, advances have been made in multiple works with security proven against both quantum \cite{CSW14, KAF17} and no-signalling adversaries \cite{GMTD+13, CSW16, GHH+14, BRGH+16, RBHH+16}. The protocol in \cite{GMTD+13} allows to amplify arbitrarily weak SV sources against no-signalling side information, however it requires a large  number of spatially separated devices (polynomial in the number of bits taken from the source). In \cite{BRGH+16, RBHH+16}, practical noise-tolerant protocols were introduced for amplifying arbitrarily weak SV sources against no-signaling side information and experimentally demonstrated in \cite{our-4}. However, these latter protocols were designed to handle private rather than public sources, i.e., they assume that all of the bits produced by the source are kept private, and are never leaked to the adversary even after completion of the protocol. Recently, a protocol for handling weak public SV sources was presented in \cite{RBH21} however as in \cite{CR12} this protocol also required a large number of measurements and has vanishing noise tolerance.

Chung, Shi and Wu in \cite{CSW14} introduced the first protocol for amplifying arbitrarily weak general min-entropy sources secure against quantum adversaries and later generalised to no-signaling adversaries in \cite{CSW16}. However, their protocols had the drawback of requiring a large number of devices (for quantum adversaries, a number of devices polynomial in $1/\epsilon_f$ where $\epsilon_f$ is the final distance from uniform of the output bits, and exponential for no-signaling adversaries). Other drawbacks of their paradigm were that the security parameter was inverse polynomial in the number of bits used from the source, and the protocols had low noise-tolerance and vanishing efficiency. Nevertheless, their result was also very important from a fundamental point of view. Namely, the result shows that the presence of arbitrarily weak random process in nature together with the assumption that nature obeys the laws of quantum mechanics (or even only the no-signalling principle) imply the existence of fully random processes. 

In this paper, we present a protocol to handle min-entropy sources consisting of two blocks of sufficiently high entropy. Our setting is similar to that in \cite{KAF17} in that we will assume that the adversary is limited by quantum mechanics (and moreover holds only classical side information about the weak source), and that a Markov assumption holds - any correlations between the device and the source can be attributed to the adversarial side information. The setting may be expanded to consider a no-signalling adversary following the analysis in \cite{BRGH+16} - this allows to use only classical-proof randomness extractors to perform extraction against no-signaling adversaries, with a big penalty term on the final security parameter (and a consequent reduction in the generation rate and efficiency of the protocol). Secondly, one may consider relaxing the Markov assumption to include more general correlations between the source and device following the analysis in \cite{WBG+17}.

\section{Min-Entropy Sources}


\begin{definition} (Conditional Min-entropy).
Let $\rho_{\mathtt{X}E}\in\text{Den}(\mathtt{X}\otimes E)$, the set of normalized density operators on $\mathtt{X}\otimes E$. The min-entropy of $\mathtt{X}$ conditioned on $E$ is defined as 
\begin{equation}
    H_{\min}(\mathtt{X}|E)_{\rho}:=\max\{\lambda\in \mathbb{R} :\exists \sigma_E\in \text{Den}(E) \text { s.t. } 2^{-\lambda} \mathbb{I}_{\mathtt{X}} \otimes \sigma_E \geq \rho_{\mathtt{X}E}\}
\end{equation}
\end{definition}

\begin{definition}(Block Min-Entropy Source).
A distribution $\mathtt{X}= \mathtt{X}_1, \mathtt{X}_2,\ldots,\mathtt{X}_c$ on $\{0,1\}^{nc}$ is called a $(n,c,k_1,k_2,\ldots,k_c)$-block source if for all $i=1,\ldots,c$ we have that for all $\mathtt{x}_1\in \mathtt{X}_1,\ldots, \mathtt{x}_{i-1}\in \mathtt{X}_{i-1}$, and for any (classical) side information $\Lambda$
\begin{equation}
    H_{\min}(\mathtt{X}_i|\mathtt{X}_1=\mathtt{x}_1,\ldots,\mathtt{X}_{i-1}= \mathtt{x}_{i-1}, \Lambda)\geq k_i
\end{equation}
i.e. each block has min-entropy $k_i$ even conditioned on the previous blocks. If $k_1=k_2=\ldots=k_c = k$, we say that $\mathtt{X}$ is an $(n, c, k)$-block source. 
\end{definition}


We will also have occasion to use the smooth min-entropy which is the maximum of the min-entropy in a $\gamma$-neighborhood around the quantum state $\rho_{\mathtt{X}E}=\sum_{\mathtt{x}} p_{\mathtt{x}} |\mathtt{x}\rangle\langle \mathtt{x}|\otimes \rho_E^{\mathtt{x}}$ for $\gamma \in (0,1)$, i.e.
\begin{equation}
    H_{\min}^{\gamma}(A_{\mathtt{x}}|E)_{\rho_{\mathtt{X}E}}=\max_{\sigma_{\mathtt{X}E}\in B^{\gamma}(\rho_{\mathtt{X}E})} H_{\min} (\mathtt{X}|E)_{\sigma_{\mathtt{X}E}}
\end{equation}
where $B^{\gamma}(\rho_{\mathtt{X}E})$ is the set of sub-normalised states that are at most $\gamma$ away from $\rho_{\mathtt{X}E}$ in terms of purified distance.


$H_{\min}(\mathtt{X}|E)$ represents the number of bits that can be extracted from $\mathtt{X}$ secure against a adversary who holds $E$.
The conditional von Neumann entropy $H(\mathtt{X}|E)$ measures the number of bits that can be extracted from $\mathtt{X}$ when multiple copies of $\mathtt{X}E$ are available.

\section{Setting}
In this paper, we will be interested in a device-independent protocol using finite devices for randomness extraction from a $(n,2,k_1,k_2)$-block min-entropy source $\mathtt{X} = (\mathtt{X}_1, \mathtt{X}_2) \in \{0,1\}^{2n}$ against quantum side information. We will assume that the two blocks have sufficiently high entropy, specifically that $k_1 = O(n^{\alpha})$ for constant $0< \alpha \leq 1$ and $k_2 \geq \left(\frac{1}{2} + \delta' \right) n$ for fixed constant $0 < \delta' < 19/32$.

We consider a setting with a single min-entropy source (with two blocks) $\mathtt{X} = (\mathtt{X}_1, \mathtt{X}_2) \in \{0,1\}^{2n}$ and an untrusted device with at least two separated components. Both the source and the device can be correlated with classical side information $\Lambda$ held by an adversary. The adversary furthermore holds a purification $E$ of the quantum state of the device. During the protocol, the source produces inputs $\mathcal{X}^N \mathcal{Y}^N$ for the device which produces outputs $\mathcal{A}^N \mathcal{B}^N$. Here we have denoted $\mathcal{X}^N = X_1, \ldots, X_N$ where $X_j$ denotes the input for the $j$-th run of the protocol and analogously for $\mathcal{Y}^N$, and similarly we have denoted $\mathcal{A}^N = A_1, \ldots A_N$ where $A_j$ denotes the output of Alice's device in the $j$-th run of the protocol and analogously for $\mathcal{B}^N$. An independent-source extractor $\text{Ext}$ produces the final output bits of the protocol $R$ using as its inputs the outputs of the device $\mathcal{A}^N\mathcal{B}^N$ and the second block of bits $\mathtt{X}_2$ from the min-entropy source, i.e., $R = \text{Ext}\left(\mathcal{A}^N\mathcal{B}^N, \mathtt{X}_2 \right)$.

\subsection{Assumptions}
We make the following assumptions in the protocol.
\begin{enumerate}
\item The adversary is limited by quantum mechanics, specifically the adversary holds a purification $E$ of the initial quantum state of the device held by the honest players.

\item The adversary holds classical side information $\Lambda$ about the min-entropy source.

\item The untrusted device has two separated components that are non-signalling (shielded) with respect to each other. 

\item The source is a $(n, 2, k_1, k_2)$-block source with $k_1 = O(n^{\alpha})$ for constant $0< \alpha \leq 1$ and $k_2 \geq \left(\frac{1}{2} + \delta' \right) n$ for fixed constant $0 < \delta' < 19/32$.

\item 
\label{assum:Markov} While the device produces outputs, it holds that
\begin{equation}
I\left(\mathcal{A}^{l-1}\mathcal{B}^{l-1}:X_lY_l | \mathcal{X}^{l-1} \mathcal{Y}^{l-1}E \Lambda \right) = 0,
\end{equation}
and after the device stops producing outputs, it holds that
\begin{equation}
I\left(\mathtt{X}_2 : \mathcal{A}^N\mathcal{B}^N|\mathcal{X}^N\mathcal{Y}^NE\Lambda \right) =0.
\end{equation}
Here $\mathcal{A}^{l-1} = A_1, A_2, \ldots, A_{l-1}$ and similarly for the other random variables. 

\item If the devices running the protocol are later reused, they do not leak information about previously run protocols \cite{BCK13}. 

\end{enumerate}

The Assumption \ref{assum:Markov} is known as the Markov assumption \cite{FPS16, KAF17} and it implies that the Markov model for quantum-proof multi-source extractors can be used \cite{FPS16}. In this setting, the source and the device are considered to be independent conditioned on the adversarial side information which may be quantum. This assumption states that the source block $\mathtt{X}_2$ and the outputs of the device $\mathcal{A}^N\mathcal{B}^N$ with the side information (and previous bits from the min-entropy source) $\mathcal{X}^N\mathcal{Y}^NE\Lambda$ form a Markov chain. Note that here $\mathtt{X}_2$ and $\mathcal{A}^N\mathcal{B}^N$ are classical states while $E$ represents a quantum state. We say that a $ccq$-state $\rho_{C_1C_2Q}$ belongs to the Markov model if $C_1 \leftrightarrow Q \leftrightarrow C_2$ forms a Markov chain, i.e., the conditional mutual information $I(C_1:C_2 | Q)_{\rho} = 0$. In other words, we assume that any correlation between the bits taken from the source and the bits produced by the device $D=(D_1,D_2)$, are due to the previous bits taken from the source and Eve's side information. A result proven in \cite{FPS16} then shows that provided that $\mathcal{A}^N\mathcal{B}^N$ and $\mathtt{X}_2$ have sufficient amount of entropy conditioned on these variables, any (strong) multi-source extractor remains a (strong) quantum-proof multi-source extractor in the Markov model, so that the randomness can be extracted (with some loss in parameters). The recent development of such extractors is crucial in allowing us to extract randomness from a single weak source. The Bell test on the device ensures that the output string $\mathcal{A}^N\mathcal{B}^N$ is independent of the block $\mathtt{X}_2$ conditioned on the inputs and any adversarial side information. The assumption \ref{assum:Markov} also states that the inputs $X_l, Y_l$ in the $l$-th round of the protocol for $l \in [N]$ do not reveal any new information about the previous outcomes $\mathcal{A}^{l-1}\mathcal{B}^{l-1}$ (where $\mathcal{A}^{l-1} = A_1,A_2,\ldots, A_{l-1}$) other than what was already available through the previous inputs $\mathcal{X}^{l-1}\mathcal{Y}^{l-1}$ (where $\mathcal{X}^{l-1} = X_1,X_2,\ldots, X_{l-1}$) and through the side information $E, \Lambda$. This assumption is crucial to apply the entropy accumulation theorem \cite{DOR16, ADF+18, KAF17} and derive a lower bound on the (conditional smooth min) entropy of $\mathcal{A}^N\mathcal{B}^N$ conditioned on the input bits $\mathcal{X}^N\mathcal{Y}^N$ and adversarial variables $E\Lambda$.

\section{The Protocol}


\textbf{Parameters}:\\
$(n, 2, k_1, k_2)$: Two-block min-entropy source with $k_1 = O(n^{\alpha})$ for constant $0< \alpha \leq 1$ and $k_2 \geq \left(\frac{1}{2} + \delta' \right) n$ for fixed constant $0 < \delta' < 19/32$. 
$D=(D_1,D_2):$ untrusted device of $2$ separated components\\
$N:$ number of the rounds\\
$G_{\text{exp}}:$ Lower bound on the expected score in a game $G$ for an honest implementation\\
$\delta:$ Confidence interval for the estimation \\
$\text{Ext}_{\text{T}}:\{0,1\}^{n}\times\{0,1\}^{d}\rightarrow\{0,1\}^{m}:$ Trevisan's quantum-proof $(k, \epsilon)$-strong seeded extractor with $k = n^{\gamma} m + 8 \log(m/\epsilon) + O(1)$, $d = O(\log n)$ and $\epsilon = n^{-\Omega(1)}$ (Corollary 5.6 of \cite{DPVR12}).
\\
$\text{Ext}_{\text{R}}: \{0,1\}^N \times \{0,1\}^n \rightarrow \{0,1\}^M$: Raz's $\left(k_1, k_2, \epsilon_{\text{Ext}} \right)$ quantum-proof randomness extractor strong in both inputs separately \cite{FPS16}. \\ 
\begin{figure}[h]
	\begin{protocol*}{Protocol I}
		\begin{enumerate}
	\item[1] The honest players use the $n$ bits $\mathtt{X}_1$ along with bit strings $j$ of length $d$ for $j \in [2^d]$ as inputs to the Trevisan extractor $\text{Ext}_{\text{T}}$ to obtain $2^d$ bit strings $S_j = \text{Ext}_{\text{T}}(\mathtt{X}_1, j)$ each of length $m$. They choose an arbitrary substring $S'_j$ of length $m' = O(\log n)$ for each $j \in [2^d]$.  The $S'_j$ constitute a quantum-somewhere-random source with full support on all $2^{m'}$ strings, where each bit is $\epsilon$-away from uniform with respect to device. 	
	\item[2] The players use the obtained $2^d \cdot m'$ bits as inputs $\left(X_{j,k_j}, Y_{j,k_j} \right)$ in $N := 2^d \cdot (m'/2)$ sequential runs of an MDL-Hardy game, where $j \in [2^d]$ indicates the blocks and $k_j \in [m'/2]$ indicates the run within the $j$-th block. That is, the players use the bits $S'_1$ to choose inputs for the first $m'/2$ runs, $S'_2$ for the next (sequential) block of $m'/2$ runs, and so on till the $2^d$-th block of $m'/2$ runs. They record the corresponding outputs $\left(A_{j,k_j}, B_{j, k_j} \right)$ in each run. We denote the output bit string $\mathcal{K}^N = \big\{K_{1,1}, \ldots, K_{2^d, m/2} \big\}$ for $K = A, B$, and the inputs as $\mathcal{L}^N = \big\{L_{1,1}, \ldots, L_{2^d, m/2} \big\}$ for $L  = X, Y$. 
	\item[3] The players test for the violation of the corresponding MDL-Hardy parameter $M_{\epsilon}$ in each of the $2^d$ blocks of $m'/2$ runs each. That is, the players test that the average value $L_{m'}^{j} = \frac{2}{m'}\sum_{k_j=1}^{m'/2} M_{\epsilon}^{j,k_j}$ of the MDL-Hardy parameter in the $k_j$-th run within the $j$-th block for $j \in [2^d], k_j \in [m'/2]$ satisfies $L_{m'}^{j} \geq \delta$ for each $j \in [2^d]$ where $\delta > 0$ is a fixed constant. They abort the protocol if $L_{m'}^{j} < \delta$ for any $j \in [2^d]$. 
	\item[4] The honest players then use the $n$ bits $\mathtt{X}_2$ from the two-block min-entropy source along with Alice and Bob's output bit strings $\mathcal{A}^N\mathcal{B}^N$ as inputs to the Raz extractor $\text{Ext}_{\text{R}}$ to produce the final output bits as 
	\begin{equation}
	R := \text{Ext}_{\text{R}}\left(\mathcal{A}^N\mathcal{B}^N, \mathtt{X}_2 \right).
	\end{equation} 
	 
    
    
    
    
\end{enumerate}
	\end{protocol*}
	\caption{\label{protocol1sDI} Protocol I for Device-Independent Randomness Amplification of a Min-Entropy Source.}
\end{figure}

\section{Quantum-Somewhere-Random-Source}
The first step of the protocol is a classical pre-processing procedure where the honest parties apply a seeded extractor (specifically we will use Trevisan's extractor which was proven to be quantum-proof \cite{DPVR12}) $\text{Ext}_{\text{T}}$ on $\mathtt{X}_1$ (an $n$-bit string) together with seeds $j$ that are $d$-bit strings. Since we do not have an independent random seed, we will enumerate all $2^d$ possible bit strings $j$ and create a set of $2^d$ random variables $S_j = \text{Ext}_{\text{T}}(\mathtt{X}_1, j)$ for $j = 1, \ldots, 2^d$. This procedure creates a somewhere-random-source, i.e., a source in which (at least) one of the $S_j$ is (close-to-)random (this procedure was also used in the protocol in \cite{CSW14}) .

\begin{definition}(Quantum-Somewhere-Random Source).
A classical-quantum-state $\rho\in \text{Den}(H_{S_1}\otimes H_{S_2}\otimes\ldots\otimes H_{S_{2^d}}\otimes H_E)$ with classical $S_1,S_2,\ldots,S_{2^d}\in\{0,1\}^m$ and quantum $E$ is a $(2^d,m)$-quantum-somewhere-random source against $E$ if there exists a $j\in [2^d]$ such that 
\begin{equation}
    \rho_{S_jE}=\rho_{Um}\otimes\rho_E,
\end{equation}
where $\rho_{Um}$ is the fully mixed state on a system of dimension $2^m$.
We say that $\rho$ is a $(2^d,m,\epsilon)$-quantum somewhere-random source if there exists a $j\in[2^d]$ such that 
\begin{equation}
    \frac{1}{2}\big\|\rho_{S_jE}-\rho_{Um}\otimes\rho_E \big\| \leq \epsilon.
\end{equation}
\end{definition}
\begin{proposition}(\cite{CSW14}).
Let $\text{Ext:}\{0,1\}^n\times\{0,1\}^d\rightarrow\{0,1\}^m$ be a quantum-proof ${(k,\epsilon)}$-strong extractor. Let $\rho_{\mathtt{X}_1E}$ be a cq-state with $H_{\min}(\mathtt{X}_1|E)\geq k$.  For every $j\in\{0,1\}^d$, let $S_j=\text{Ext}(\mathtt{X}_1,j)$. Then the cq-state
\begin{equation}
    \rho_{S_1S_2\ldots S_{2^d}E}:=\sum_{\mathtt{x}} p_{\mathtt{x}}|S_1\rangle\langle S_1|\otimes\ldots\otimes|S_{2^d}\rangle\langle S_{2^d}|\otimes\rho_{\mathtt{x}}^E
\end{equation}
is a $(2^d,m,\epsilon)$-quantum-somewhere-random source. Furthermore, the expectation of $\frac{1}{2}\big\|\rho_{S_jE}-\rho_{Um}\otimes\rho_E\big\|$ over a uniform random index $j\in\{0,1\}^d$ is at most $\epsilon$.
\end{proposition}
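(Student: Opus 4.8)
The plan is to derive both assertions directly from the defining property of a quantum-proof strong seeded extractor, using only the block-diagonal structure of the relevant states over the seed register and the additivity of the trace norm on orthogonal blocks; this is essentially the argument of \cite{CSW14}. Recall that $\text{Ext}$ being a quantum-proof $(k,\epsilon)$-\emph{strong} extractor means precisely that for every cq-state $\rho_{\mathtt{X}_1E}$ with $H_{\min}(\mathtt{X}_1|E)\geq k$, and for a register $J$ holding a uniformly random $d$-bit string independent of $\mathtt{X}_1E$,
\begin{equation}
\frac{1}{2}\big\|\rho_{\text{Ext}(\mathtt{X}_1,J)\,JE}-\rho_{Um}\otimes\rho_{Ud}\otimes\rho_E\big\|\leq\epsilon,
\end{equation}
where $\rho_{Ud}$ denotes the fully mixed state on $\{0,1\}^d$. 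The hypothesis $H_{\min}(\mathtt{X}_1|E)\geq k$ makes this bound available.

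First I would expand the left-hand side by writing both states explicitly as operators block-diagonal in the classical seed register $J$. Since $S_j=\text{Ext}(\mathtt{X}_1,j)$ for each fixed $j$,
\begin{align}
\rho_{\text{Ext}(\mathtt{X}_1,J)\,JE}&=\frac{1}{2^d}\sum_{j\in\{0,1\}^d}|j\rangle\langle j|_J\otimes\rho_{S_jE},\\
\rho_{Um}\otimes\rho_{Ud}\otimes\rho_E&=\frac{1}{2^d}\sum_{j\in\{0,1\}^d}|j\rangle\langle j|_J\otimes\rho_{Um}\otimes\rho_E.
\end{align}
Hence the difference of the two states is block-diagonal in $J$ with $j$-th block $\frac{1}{2^d}\big(\rho_{S_jE}-\rho_{Um}\otimes\rho_E\big)$, and since the trace norm of a block-diagonal operator equals the sum of the trace norms of the blocks,
\begin{equation}
\frac{1}{2}\big\|\rho_{\text{Ext}(\mathtt{X}_1,J)\,JE}-\rho_{Um}\otimes\rho_{Ud}\otimes\rho_E\big\|=\mathbb{E}_{j\in\{0,1\}^d}\left[\frac{1}{2}\big\|\rho_{S_jE}-\rho_{Um}\otimes\rho_E\big\|\right].
\end{equation}
Combining this identity with the extractor bound proves the ``furthermore'' statement, namely that the expectation over a uniform $j$ of $\frac{1}{2}\|\rho_{S_jE}-\rho_{Um}\otimes\rho_E\|$ is at most $\epsilon$.

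Finally, to conclude that $\rho_{S_1\ldots S_{2^d}E}$ is a $(2^d,m,\epsilon)$-quantum-somewhere-random source, I would observe that the minimum of a finite list of non-negative reals is at most their average; hence there exists $j^\ast\in[2^d]$ with $\frac{1}{2}\|\rho_{S_{j^\ast}E}-\rho_{Um}\otimes\rho_E\|\leq\epsilon$, which is exactly the defining condition of such a source. I do not expect any real obstacle: the whole argument is a short unpacking of definitions. The only points requiring care are that the \emph{strong} extractor property (the seed is retained as an output, and the output is close to uniform jointly with the seed and the side information $E$) is what makes a per-seed statement meaningful -- a merely weak extractor would not suffice -- and that one must invoke additivity of the trace norm over the orthogonal seed blocks to pass from the joint closeness bound to the claimed average over $j$.
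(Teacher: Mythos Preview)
Your proof is correct and follows essentially the same approach as the paper: the paper simply asserts that the strong-extractor bound yields the average $\sum_j \frac{1}{2^d}\|\rho_{S_jE}-\rho_{Um}\otimes\rho_E\|\leq\epsilon$ and then picks a $j^\ast$ achieving the bound, while you supply the (standard) justification via block-diagonality in the seed register and additivity of the trace norm that the paper leaves implicit.
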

The proof of this proposition is rather straightforward and comes simply from the fact that the output of the extractor being $\epsilon$-away from uniform for a uniformly random seed implies that
\begin{equation}
\sum_{j=1}^{2^d} \frac{1}{2^d} \big\|\rho_{\text{Ext}(\mathtt{X}_1,j)E}-\rho_{Um}\otimes\rho_E\big\|\leq \epsilon.
\end{equation}
So that at least for one the $j^* \in \{0,1\}^d$, it holds that $\big\|\rho_{\text{Ext}(\mathtt{X}_1,j^*)E}-\rho_{Um}\otimes\rho_E\big\| \leq \epsilon$. 

\begin{remark}
In fact, one can go further and define $S := \big\{ j \; | \big\|\rho_{\text{Ext}(\mathtt{X}_1,j)E}-\rho_{Um}\otimes\rho_E\big\| \leq \sqrt{\epsilon} \big\}$. This gives that
\begin{eqnarray}
&&\sum_{j \in S} \frac{1}{2^d} \big\|\rho_{\text{Ext}(\mathtt{X}_1,j)E}-\rho_{Um}\otimes\rho_E \big\| + \sum_{j \notin S} \frac{1}{2^d} \big\|\rho_{\text{Ext}(\mathtt{X}_1,j)E}-\rho_{Um}\otimes\rho_E \big\|  \leq  \epsilon, \nonumber \\
&&0 + \frac{1}{2^d} (2^d - |S|) \sqrt{\epsilon} \leq \epsilon \implies \quad
|S| \geq  2^d(1 - \sqrt{\epsilon}).
\end{eqnarray}
In other words, for at least $2^d(1 - \sqrt{\epsilon})$ indices $j \in \{0,1\}^d$, it holds that $\big\|\rho_{\text{Ext}(\mathtt{X}_1,j)E}-\rho_{Um}\otimes\rho_E \big\| \leq \sqrt{\epsilon}$. While this observation in principle means that a linear fraction of the $S_j$ are random, it must be noted that the random variables $S_j$ are correlated with each other, so that care must be taken while using these as inputs in sequential runs of the protocol (as we shall see in the following section). 
\end{remark}

In the first step of the protocol, we use the extractor from Thm. [\ref{thm-first-step-2}] which works if the source has sufficiently high min-entropy $H_{\min}(\mathtt{X}_1|\Lambda) = n^{\alpha}$ for constant $0 < \alpha \leq 1$. In this case, the seed length is $d = O(\log n)$ and we obtain the following statement. 

\begin{proposition}
Let $\mathtt{X}=(\mathtt{X}_1, \mathtt{X}_2)$ on $\{0,1\}^{2n}$ be a $(n, 2, k_1, k_2)$-block min-entropy source with $k_1 = O(n^{\alpha})$ for constant $0< \alpha \leq 1$ i.e.,
\begin{eqnarray}
	H_{\min}(\mathtt{X}_1|\Lambda) &\geq & k_1, \nonumber \\
    H_{\min}(\mathtt{X}_2|\mathtt{X}_1=\mathtt{x}_1, \Lambda) &\geq & k_2  \quad \forall x_1.
\end{eqnarray}
for any classical side information $\Lambda$ held by the adversary. Let $\text{Ext}_{\text{T}}$ be the extractor from Thm. \ref{thm-first-step-2}. 
For every $j\in\{0,1\}^d$ with $d = O(\log n)$ let $S_j=\text{Ext}_{\text{T}}(\mathtt{X}_1,j)$. Then the classical-quantum-state
\begin{equation}
    \rho_{S_{1}S_{2}\ldots S_{2^d}E}=\sum_x p_{x}|S_{1}\rangle\langle S_{1}|\otimes|S_{2}\rangle\langle S_{2}|\otimes\ldots\otimes |S_{2^d}\rangle\langle S_{2^d}|\otimes \rho_{x}^E
\end{equation}
is a $(2^d,m,\epsilon)$-quantum-somewhere-random source, i.e. $\exists j^* \in[2^d]$ such that $\big\|\rho_{S_{j^*}E}-\rho_{Um}\otimes\rho_E \big\| \leq \epsilon$, where $\epsilon = n^{-c_1}$ for constant $c_1 > 0$ and $m = n^{\alpha - \gamma} - o(1)$ for constant $0 < \gamma < \alpha$. 
\end{proposition}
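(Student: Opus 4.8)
The plan is to instantiate the preceding Proposition of \cite{CSW14} with the concrete quantum-proof strong seeded extractor $\text{Ext}_{\text{T}}$ of Theorem~\ref{thm-first-step-2} (Trevisan's extractor, Corollary 5.6 of \cite{DPVR12}), and then to verify by a short parameter computation that the first block's min-entropy $k_1$, which is of order $n^{\alpha}$, suffices to meet that extractor's entropy requirement while still producing $m = n^{\alpha-\gamma} - o(1)$ output bits at error $\epsilon = n^{-c_1}$. Note that only the first-block entropy $H_{\min}(\mathtt{X}_1|\Lambda) \geq k_1$ enters at this step; the bound on $k_2$ plays no role here.

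First I would fix the target parameters for $\text{Ext}_{\text{T}}: \{0,1\}^{n}\times\{0,1\}^{d}\rightarrow\{0,1\}^{m}$, which is a quantum-proof $(k,\epsilon)$-strong extractor provided $k \geq n^{\gamma} m + 8\log(m/\epsilon) + O(1)$, with seed length $d = O(\log n)$ in this regime. Set $\epsilon := n^{-c_1}$ for a suitable constant $c_1 > 0$; then the entropy overhead $8\log(m/\epsilon) + O(1) = 8\log m + 8 c_1 \log n + O(1) = O(\log n)$ is $o(n^{\beta})$ for every $\beta > 0$, in particular $o(n^{\gamma})$. Hence, taking $m := \big\lfloor (k_1 - O(\log n))\, n^{-\gamma} \big\rfloor$ makes the inequality $k_1 \geq n^{\gamma} m + 8\log(m/\epsilon) + O(1)$ hold, and since $k_1$ is of order $n^{\alpha}$ with $0 < \gamma < \alpha$ this yields $m = n^{\alpha-\gamma} - o(1)$ (the leading constant from $k_1$ being absorbed into the implicit constants). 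With these choices all hypotheses of Theorem~\ref{thm-first-step-2} are met: $H_{\min}(\mathtt{X}_1|\Lambda) \geq k_1 \geq k$, $d = O(\log n)$, and $\epsilon = n^{-c_1} = n^{-\Omega(1)}$.

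Next I would apply the preceding Proposition essentially verbatim. Enumerating all $2^d$ seeds $j$ and setting $S_j = \text{Ext}_{\text{T}}(\mathtt{X}_1, j)$, the strong-extractor guarantee --- which holds against the quantum purification $E$ because Trevisan's construction is quantum-proof --- gives $\sum_{j=1}^{2^d} 2^{-d}\, \big\|\rho_{S_j E} - \rho_{Um}\otimes\rho_E\big\| \leq \epsilon$, whence there is some $j^* \in [2^d]$ with $\big\|\rho_{S_{j^*} E} - \rho_{Um}\otimes\rho_E\big\| \leq \epsilon$. This is precisely the assertion that $\rho_{S_1 S_2 \ldots S_{2^d} E}$ is a $(2^d, m, \epsilon)$-quantum-somewhere-random source with the claimed $m$ and $\epsilon$. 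One line of care is needed because the entropy hypothesis refers to the classical adversarial string $\Lambda$ whereas the somewhere-random conclusion is stated against the quantum system $E$; here I would invoke the setting's assumption that, conditioned on $\Lambda$, the initial device state (hence its purification $E$) is uncorrelated with $\mathtt{X}_1$, so that $H_{\min}(\mathtt{X}_1 | E\Lambda) = H_{\min}(\mathtt{X}_1 | \Lambda) \geq k_1$, and then the quantum-proof property of $\text{Ext}_{\text{T}}$ applies with conditioning system $E\Lambda$.

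I do not anticipate a genuine obstacle: the argument sits on top of two black boxes --- the somewhere-random construction of \cite{CSW14} and the performance guarantee of Trevisan's quantum-proof extractor --- and is mostly parameter bookkeeping. The one mildly delicate point is confirming that the additive $8\log(m/\epsilon) + O(1)$ entropy overhead together with the $O(\log n)$ seed length are truly negligible against $k_1$ of order $n^{\alpha}$ across the whole admissible range $0 < \alpha \leq 1$, $0 < \gamma < \alpha$ --- i.e. that one can indeed push the output length up to $m = n^{\alpha-\gamma} - o(1)$ --- while simultaneously keeping $\epsilon = n^{-c_1}$ polynomially small; both follow immediately once the inequalities are written out.
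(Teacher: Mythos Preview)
Your proposal is correct and matches the paper's approach exactly: the paper does not give a separate proof of this proposition, but simply presents it as the instantiation of the preceding \cite{CSW14} somewhere-random proposition with the Trevisan extractor of Theorem~\ref{thm-first-step-2}, noting that $d = O(\log n)$ and $H_{\min}(\mathtt{X}_1|\Lambda) = n^{\alpha}$ suffice. Your write-up in fact supplies more detail than the paper --- the explicit parameter bookkeeping for $m$ and $\epsilon$, and the remark on passing from conditioning on $\Lambda$ to conditioning on $E\Lambda$ --- both of which the paper leaves implicit.
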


\section{Using the Somewhere-Random-Source in a Bell test}
\label{sec:SRS-Bell}
At this point, we have a $(2^d,m,\epsilon)$-quantum-somewhere-random source with $\epsilon = n^{-c_1}$ for constant $c_1 > 0$. By the property of the quantum-somewhere-random source, we have that there exists $j^* \in[2^d]$ such that $\big\|\rho_{\text{Ext}(X,j^*)E}-\rho_{Um}\otimes\rho_E\big\| \leq \epsilon$ (there exists index $j^*$ such that the $m$-bit string $S_{j^*}$ is $\epsilon$-away from uniform). By the fact that every $\epsilon$-extractor is also an $\epsilon$-disperser, this means that for the index $j^*$, the extractor has nearly full support on outputs of length $m$, specifically the support of the output of the extractor obeys $|\text{Supp}\left(\text{Ext}_{\text{T}}(X,j^*) \right)| \geq (1- \epsilon) \cdot 2^m$. With $\epsilon = n^{-c_1}$ and by the construction of the Trevisan extractor, we choose an output length $m' = O(\log n)$ such that the output of the extractor has full support on this output, i.e., outputs every possible $m'$-bit string with positive probability. Specifically, we have the requirement that $2^{m-m'} > \epsilon \cdot 2^m$ so that $m' < -\log_2 \epsilon = O(\log n)$ for $\epsilon = n^{-c_1}$. Note that we can choose this such that $m' = c_2 \cdot d$ for any constant $c_2 > 1$. That is, we obtain a $(2^d, m', \epsilon)$-quantum-somewhere-random source with $\epsilon = n^{-c_1}$ where each of the $m'$-bit strings appears with positive probability for an index $j^* \in [2^d]$ (the extractor acts as a disperser for this value of $m'$ and outputs every possible $m'$-bit string with positive probability).

The second step in the protocol is to use this source as the input for $\big\lfloor 2^d \cdot m'/p \big \rfloor$ rounds of a suitably designed Bell test with $p$-bit inputs per round. We will illustrate Protocol I with the measurement-dependent-locality (MDL) version of the two-player Hardy paradox where $p=2$, and without loss of generality we will take $m'$ to be even. 

The measurement-dependent-locality (MDL) version of a Bell test considers a modified Bell inequality that tolerates inputs that are biased away from uniform and correlated with the device running the Bell test, i.e., in the situation when perfect measurement-independence is not available. An MDL inequality considers the joint distributions $P_{A,B,X,Y}(a,b,x,y)$ rather than the conditional behaviors $P_{A,B|X,Y}(a,b|x,y)$ considered in a traditional Bell inequality, and is designed to detect non-locality in any situation in which the inputs are not fully deterministic i.e., when $l \leq P_{X,Y}(x,y) \leq h$ holds for $0 < l, h < 1$. For more details on measurement-dependent-locality inequalities, we refer the reader to \cite{PRB+14, KAF17, our-4}. 

In each round of the two-player MDL-Hardy test, each player receive a single bit input ($x, y \in\{0,1\}$) and produces a single bit output ($a, b\in\{0,1\}$). The goal of the players is to maximize the value of the MDL-Hardy parameter
\begin{equation}
M_{\epsilon} := (1/2-\epsilon)^2 P_{A,B,X,Y}(0,0,0,0) - (1/2+\epsilon)^2 \left[P_{A,B,X,Y}(0,1,0,1) + P_{A,B,X,Y}(1,0,1,0) + P_{A,B,X,Y}(0,0,1,1) \right].
\end{equation}
One can readily check that $M_{\epsilon} \leq 0$ holds for classical theories (Local Hidden Variable models) while a quantum strategy exists that can achieve $P^{Q}_{A,B,X,Y}(0,0,0,0) > 0$ with $P^{Q}_{A,B,X,Y}(0,1,0,1) = P^{Q}_{A,B,X,Y}(1,0,1,0)=P^{Q}_{A,B,X,Y}(0,0,1,1)= 0$. We will use the $2^d \cdot m'$ bits from the somewhere-random-source as the inputs for $N := 2^d \cdot (m'/2)$ sequential runs of the MDL-Hardy test. That is we use the bits to choose the inputs $X_{j,k_{j}}, Y_{j, k_{j}}$ for each of $m'/2$ runs (i.e., $k_{j} \in [m'/2]$) in the $j$-th block for $j = 1, \dots, 2^d$. 

In \cite{KAF17}, the following bound was derived on the randomness certified per round in the MDL-Hardy test when $M_{\epsilon} > 0$. 
\begin{lemma}[\cite{KAF17}]
Suppose that in  the $k_j$-th round in the $j$-th block of Protocol I, the true value of the MDL-Hardy parameter is $M_{\epsilon} > 0$. Then the bound
\begin{equation}
H(A_{j,k_j} B_{j,k_j}|X_{j,k_j}Y_{j,k_j}E\Lambda) \geq 1 - h\left(\frac{1}{2} + \frac{1}{(1/4-\epsilon^2)^2}\sqrt{M_{\epsilon} \left(M_{\epsilon} +(1/4-\epsilon^2)^2 \right)} \right)
\end{equation}
on the conditional von Neumann entropy of the outputs $A_{j,k_j}, B_{j,k_j}$ holds, where $h(p) = - p \log p - (1-p) \log (1-p)$ denotes the binary entropy.
\end{lemma}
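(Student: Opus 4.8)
The plan is to prove this single-round bound by the route standard for device-independent von~Neumann entropy certification, essentially as in~\cite{KAF17}: reduce to two-qubit strategies by Jordan's lemma, solve the two-qubit optimisation linking the MDL-Hardy score to Eve's knowledge, and recombine by convexity. In the round under consideration write $A,B$ for the outputs, $X,Y$ for the inputs, and treat the classical $\Lambda$ as part of Eve's side information, still called~$E$. By the Markov assumption the device then carries a fixed quantum strategy --- a state $\rho_{Q_AQ_B}$ with a purifying system held by Eve and four binary measurements --- while $X,Y$ (produced from the somewhere-random source) have a joint law $q_{xy}$ with every $q_{xy}\in[(1/2-\epsilon)^2,(1/2+\epsilon)^2]$. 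Set $c:=(1/4-\epsilon^2)^2=(1/2-\epsilon)^2(1/2+\epsilon)^2$ and $\tau:=M_\epsilon/c$.

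\textbf{Reduction and the role of the score.} Each party has a binary input and a binary output, so by Naimark the measurements may be taken projective and by Jordan's lemma the two observables on each side are simultaneously block-diagonal with blocks of size $\le 2$; purifying to $|\psi\rangle_{Q_AQ_BE}$, the strategy decomposes into two-qubit pieces indexed by a register Eve may hold. Since $M_\epsilon$ is affine in the strategy (fixed $q_{xy}$, convex-combined $p_{ab|xy}$) and $H(AB|XYE)$ is the matching convex combination of per-block entropies, it suffices to prove the bound within one two-qubit block and then convexify. There, parametrise $|\psi\rangle_{Q_AQ_B}$ and the four Bloch directions and compute $p_{ab|xy}$: from $q_{00}\le(1/2+\epsilon)^2$, $q_{01},q_{10},q_{11}\ge(1/2-\epsilon)^2$ and $p_{01|01},p_{10|10},p_{00|11}\ge0$ one gets at once $M_\epsilon\le c\,p_{00|00}$, i.e. $p_{00|00}\ge\tau$, and moreover $M_\epsilon>0$ forces $p_{01|01}+p_{10|10}+p_{00|11}<p_{00|00}$, so the Hardy mechanism is active. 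Conditioned on $A,B$ in the input setting $(0,0)$ Eve holds a pure state; expressing $H(AB|X=0,Y=0,E)$ through the ensemble of Eve's conditional states and reducing the outcome to an effective binary variable puts it in the form $1-h\!\big(\tfrac12+\tfrac12 o\big)$ with $o\in[0,1]$ an overlap of Eve's conditional states, while a Tsirelson-type estimate in the two-qubit block bounds how large $p_{00|00}$ can be under the Hardy constraints as a function of $o$ --- equivalently $p_{00|00}\ge\tau$ forces $o\ge 2\sqrt{\tau(\tau+1)}$.

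\textbf{Optimisation and convexification.} Minimising the resulting expression over all two-qubit strategies with $p_{00|00}\ge\tau$ and the quantum/Hardy constraints --- reducing Eve's relevant system to an effective qubit and eliminating the measurement angles by the Lagrange conditions --- the minimum is attained on a one-parameter family and equals $1-h\!\big(\tfrac12+\sqrt{\tau(\tau+1)}\big)$; substituting $\tau=M_\epsilon/c$ turns $\sqrt{\tau(\tau+1)}$ into $\frac{1}{(1/4-\epsilon^2)^2}\sqrt{M_\epsilon\!\left(M_\epsilon+(1/4-\epsilon^2)^2\right)}$, the claimed value for the $(0,0)$ setting; the other settings contribute via $H\ge0$ (or the symmetrised argument), and since $q_{00}$ enters only through $\tau$ the bound on $H(AB|XYE)$ is exactly as stated. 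Finally, recombining the Jordan blocks and Eve's branches: the observed score $M_\epsilon$ is the average of the per-block, per-branch values $M_\epsilon^{(k,e)}$, while $H(AB|XYE)$ is the same average of the per-block, per-branch entropies, each $\ge g(M_\epsilon^{(k,e)})$ with $g(t):=1-h(\tfrac12+\sqrt{(t/c)(t/c+1)})$; replacing $g$ by its lower convex envelope (which leaves the relevant extremal values unchanged) and applying Jensen gives $H(AB|XYE)\ge g(M_\epsilon)$, which is the assertion.

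\textbf{Where the difficulty lies.} The crux is the two-qubit quantum bound feeding the optimisation: pinning down exactly how large the Hardy probability $p_{00|00}$ can be as a function of the distinguishability of Eve's conditional states, so as to reach the precise constant $(1/4-\epsilon^2)^2$ and the precise $\sqrt{M_\epsilon(M_\epsilon+(1/4-\epsilon^2)^2)}$ dependence rather than a looser bound. This is where both quantum mechanics (a sum-of-squares/Tsirelson estimate, or an explicit Lagrangian optimisation over the Bloch angles) and the measurement dependence enter, and one must verify that worst-casing the biased input weights $q_{xy}$ block-by-block before averaging is legitimate --- i.e. that no attack gains by correlating the input bias with the Jordan block, the state, or Eve's branch. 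A shortcut would be to invoke the known robust self-testing of the Hardy paradox together with a continuity-of-entropy bound, but matching the stated closed form still requires carrying out the optimisation above.
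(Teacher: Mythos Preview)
The paper does not give its own proof of this lemma; it is quoted directly from \cite{KAF17} with no argument supplied. Your outline --- Jordan's-lemma reduction to two-qubit strategies, explicit optimisation over the qubit parameters, and convex recombination over blocks and over Eve's branches --- is the standard route for single-round device-independent entropy bounds and is the method of \cite{KAF17}, so at the level of strategy there is nothing to compare.

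One step in your sketch does not close, however. You establish the per-setting bound $1-h\bigl(\tfrac12+\sqrt{\tau(\tau+1)}\bigr)$ only for the input pair $(0,0)$ (via $p_{00|00}\ge\tau=M_\epsilon/c$), then write that the remaining settings ``contribute via $H\ge 0$'' and that ``$q_{00}$ enters only through $\tau$''. But $H(AB\mid XYE)=\sum_{x,y}q_{xy}\,H(AB\mid X{=}x,Y{=}y,E)$, so if three of the four summands are bounded only by zero you are left with $q_{00}\cdot g(\tau)$, not $g(\tau)$; the prefactor $q_{00}\in[(1/2-\epsilon)^2,(1/2+\epsilon)^2]$ does \emph{not} get absorbed into $\tau$ (your $\tau$ was fixed using the worst-case inequality $q_{00}\le(1/2+\epsilon)^2$, not the actual value of $q_{00}$). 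The parenthetical ``symmetrised argument'' does not help either, since the Hardy functional lacks the input symmetry that makes this trick work for CHSH. To recover the lemma as stated one needs the per-setting entropy bound for every $(x,y)$, and that is precisely what the full two-qubit optimisation --- which you acknowledge is only sketched --- has to deliver. Your ``where the difficulty lies'' paragraph correctly locates the hard step, but the input-averaging issue shows it is not an isolated technicality: without it the argument is off by a factor of order $q_{00}\approx 1/4$.
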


It was also shown, using the Entropy Accumulation Theorem (EAT) \cite{ADF+18, DOR16}, that the total entropy generated in $m'/2$ runs where the input is $\epsilon$-away from uniform is linear in $m'$. We use this fact for the $m'/2$ runs within the $j^*$ block while in the runs within the blocks $j \neq j^*$ we simply bound the entropy by $0$, i.e., $H(A_{j,k_j} B_{j,k_j}|X_{j,k_j}Y_{j,k_j}E) \geq 0$. This is because in these rounds, the inputs may not be random so that the Bell violation could in principle be simulated by local deterministic boxes that do not contain any randomness. 

%
%
%

Specifically, let us denote $\perp$ as the event of aborting the protocol and $\overline{\perp}$ as the complementary event of not aborting the protocol,
\begin{equation}
\overline{\perp} := \bigg\{ \wedge_{j=1}^{2^d} \left( L_{m'}^j \geq \delta \right) \bigg\}.
\end{equation}
Let $\rho = \rho^{\mathcal{A}^N\mathcal{B}^N\mathcal{X}^N\mathcal{Y}^N\mathcal{D}^NE\Lambda}$ denote the joint state of the devices held by the honest parties and Eve at the end of the $N := 2^d \cdot (m'/2)$ runs of the protocol. Here $\mathcal{D}^N$ denotes the classical random variable that stores the results of the (Bell) winning condition in the $N$ rounds. Let $\rho_{|\overline{\perp}}$ denote the state conditioned on the event of not aborting the protocol. The Entropy Accumulation Theorem gives a way of lower bounding the $\gamma$ conditional smooth min-entropy $H_{\min}^{\gamma}\left(\mathcal{A}^N\mathcal{B}^N|\mathcal{X}^N\mathcal{Y}^NE \Lambda \right)_{\rho_{|\overline{\perp}}}$ of this state for any $\gamma \in (0,1)$. In particular, building on the statement shown in \cite{KAF17} we have the following (the details are deferred to section \ref{sec:Entropy-Acc}).
\begin{proposition}
\label{prop:rand-acc}
Let $\rho = \rho^{\mathcal{A}^N\mathcal{B}^N\mathcal{X}^N\mathcal{Y}^N\mathcal{D}^NE\Lambda}$ denote the joint state of the devices held by the honest parties and Eve at the end of the $N$ runs of the protocol, $\overline{\perp}$ denote the event of not aborting and let $\rho_{|\overline{\perp}}$ denote the state conditioned on the event of not aborting the protocol. Then for any $\epsilon_{EA}, \gamma \in (0,1)$, either the protocol aborts with probability greater than $1 - \epsilon_{EA}$ or it holds that
\begin{equation}
H_{\min}^{\gamma}\left(\mathcal{A}^N\mathcal{B}^N|\mathcal{X}^N\mathcal{Y}^NE\Lambda \right)_{\rho_{|\overline{\perp}}} = \Omega(m').
\end{equation}

\end{proposition}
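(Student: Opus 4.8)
The plan is to apply the Entropy Accumulation Theorem (EAT) to the sequence of $N = 2^d \cdot (m'/2)$ sequential MDL-Hardy rounds, treating the runs within the distinguished block $j^*$ as the ones that genuinely accumulate entropy and bounding the contribution of all other blocks by zero. First I would set up the EAT machinery: the channels $\mathcal{M}_l$ in round $l$ map the side information to $(A_l, B_l, X_l, Y_l, D_l)$, the Markov condition of Assumption~\ref{assum:Markov} guarantees $I(\mathcal{A}^{l-1}\mathcal{B}^{l-1} : X_l Y_l \mid \mathcal{X}^{l-1}\mathcal{Y}^{l-1} E \Lambda) = 0$ which is exactly the Markov chain requirement the EAT needs, and the statistic $\mathcal{D}^N$ recording the Bell winning condition in each round plays the role of the EAT ``test'' outcome. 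The crux is the construction of an appropriate min-tradeoff function $f_{\min}$: using Lemma~[\cite{KAF17}] I would define, for a round in block $j^*$, a function on the distribution of $D_l$ that lower-bounds the single-round von Neumann entropy $H(A_l B_l \mid X_l Y_l E \Lambda)$ in terms of the observed MDL-Hardy parameter, interpolated down to the abort threshold $\delta$; for rounds in blocks $j \neq j^*$ the tradeoff function is simply $0$ since the inputs there carry no certified randomness and a local deterministic box can reproduce any statistics.

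Next I would invoke the EAT to obtain, conditioned on the non-abort event $\overline{\perp}$,
\begin{equation}
H_{\min}^{\gamma}\left(\mathcal{A}^N\mathcal{B}^N \mid \mathcal{X}^N\mathcal{Y}^N E \Lambda\right)_{\rho_{|\overline{\perp}}} \geq N t - v\sqrt{N} - \text{corrections},
\end{equation}
where $t$ is the min-tradeoff value at the constraint surface and $v$ collects the variance/second-order terms depending on $\log(1/\epsilon_{EA})$, $\log(1/\gamma)$, and the range of $f_{\min}$. The key point is a dichotomy: since within each of the $2^d$ blocks the protocol tests $L_{m'}^j \geq \delta$, either the probability of passing all these tests is below $\epsilon_{EA}$ (the ``abort with high probability'' branch), or else the conditioning on $\overline{\perp}$ does not destroy too much entropy and the EAT bound applies with the constraint that in block $j^*$ the averaged MDL parameter is at least $\delta$, forcing a strictly positive per-round entropy rate $h_\delta > 0$ there. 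Because $m'/2$ rounds in block $j^*$ each contribute at least (roughly) $h_\delta$, the linear term contributes $\Omega(m')$; I would then check that since $m' = c_2 d = \Theta(\log n)$ and $N = \Theta(m' 2^d)$, the $\sqrt{N}$ penalty is $O(\sqrt{m' 2^d}) = O(2^{d/2}\sqrt{m'})$, which must be dominated by $\Omega(m')$ — this works provided $d$ is small enough relative to $m'$, i.e. by choosing $m' = c_2 d$ with $c_2$ a sufficiently large constant and absorbing the block count appropriately; more carefully, one applies EAT \emph{per block} so the $\sqrt{\cdot}$ term in block $j^*$ scales only as $\sqrt{m'/2}$, cleanly giving $\Omega(m')$.

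I expect the main obstacle to be handling the conditioning on non-abort together with the block structure: the EAT naturally gives bounds conditioned on an event defined by the full transcript, but here the abort event is a conjunction of $2^d$ per-block tests, and we only want to harvest entropy from the single unknown block $j^*$, so I would need to argue that conditioning on the full $\overline{\perp}$ (a superset constraint that in particular forces $L_{m'}^{j^*} \geq \delta$) does not reduce the smooth min-entropy of the $j^*$-block outputs by more than a constant factor — formally, restricting attention to the sub-normalized state on the non-abort branch and using that the $j^*$-block entropy accumulation is insensitive to the tests imposed on the other blocks given the Markov structure. A secondary delicate point is that the $S_j$ (hence the inputs across blocks) are correlated, so I must be careful that the Markov condition in Assumption~\ref{assum:Markov} — which is stated for the device outputs given \emph{all} previous inputs and side information — is genuinely enough to license the EAT despite these input correlations; the resolution is that the EAT only requires the Markov chain $\mathcal{A}^{l-1}\mathcal{B}^{l-1} \leftrightarrow \mathcal{X}^{l-1}\mathcal{Y}^{l-1} E\Lambda \leftrightarrow X_l Y_l$ at each step, which is exactly what is assumed, and the correlations among inputs are harmless because they are all functions of $\mathtt{X}_1$ which (via the somewhere-random source) is already accounted for in the side information. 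Finally I would collect the constants $\epsilon_{EA}$, $\gamma$, $\delta$ and the min-tradeoff slope into the statement, confirming the $\Omega(m')$ rate.
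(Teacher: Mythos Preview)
Your proposal is essentially correct and converges to the same approach as the paper: apply the EAT to the single distinguished block $j^*$ (not to all $N$ rounds, which as you correctly diagnose would incur an unmanageable $\sqrt{N}$ penalty), and treat the other blocks as contributing zero. Two technical steps that you leave vague are made explicit in the paper: first, after obtaining $H_{\min}^{\gamma/4}(\mathcal{A}^{l^*}\mathcal{B}^{l^*}\mid \mathcal{X}^{l^*}\mathcal{Y}^{l^*}E\Lambda)$ from the block-$j^*$ EAT (with the earlier blocks' inputs absorbed into the side information), the Markov assumption $I(\mathcal{A}^{l^*}\mathcal{B}^{l^*}:\overline{X}^{l^*}\overline{Y}^{l^*}\mid \mathcal{X}^{l^*}\mathcal{Y}^{l^*}E\Lambda)=0$ is used to extend the conditioning to all inputs $\mathcal{X}^N\mathcal{Y}^N$; second, the chain rule for smooth min-entropy \cite{DBWR14,VDTR13} is invoked to pass from $H_{\min}^{\gamma/4}(\mathcal{A}^{l^*}\mathcal{B}^{l^*}\mid \mathcal{X}^N\mathcal{Y}^N E\Lambda)$ to $H_{\min}^{\gamma}(\mathcal{A}^N\mathcal{B}^N\mid \mathcal{X}^N\mathcal{Y}^N E\Lambda)$, lower-bounding the complementary term $H_{\min}^{\gamma/4}(\overline{\mathcal{A}^{l^*}}\overline{\mathcal{B}^{l^*}}\mid\cdots)$ by zero and paying only an $O(\log(1/\gamma))$ correction.
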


\section{Extraction}
From the entropy accumulation statement, we know that if the protocol aborts with probability $< 1 - \epsilon_{EA}$ then $\mathcal{A}^N\mathcal{B}^N$ constitutes a min-entropy source conditional on any quantum side information $E$ held by an adversary, any classical side information $\Lambda$ held by the adversary about the min-entropy source, as well as the inputs $\mathcal{X}^N\mathcal{Y}^N$ used during the protocol. That is
\begin{equation}
H_{\min}^{\gamma}(\mathcal{A}^N\mathcal{B}^N|\mathcal{X}^N\mathcal{Y}^NE\Lambda)\geq k_1,
\end{equation}
for $k_1 = O(m') = O(\log N)$ and $\gamma \in (0,1)$. This comes from the fact that $N = 2^d \cdot (m'/2)$ with $d = O(\log n)$ and $m' = O(\log n)$, where $m' = c_2 \cdot d$ for constant $c_2 > 1$. 

In the last step of the protocol, the parties use the second block of $n$ bits $\mathtt{X}_2$ from the min-entropy source that has sufficiently high min-entropy $k_2$ conditioned on $\mathcal{X}^N\mathcal{Y}^N$, $E$ and $\Lambda$. Specifically
\begin{equation}
H_{\min}(\mathtt{X}_2|\mathcal{X}^N\mathcal{Y}^NE\Lambda)\geq k_2,
\end{equation}
where $k_2 \geq \left(\frac{1}{2} + \delta' \right) n$
for suitable constant $0 < \delta' < 19/32$. 

Since $k_1 = O(\log n)$ we will need $k_2$ to be sufficiently high in order to use an independent-source extractor. Specifically we will use Raz's independent-source extractor $\text{Ext}_{\text{R}}$ from Thm. (\ref{thm-second-ext}). In order to use this, as stated earlier, we make the Markov assumption \ref{assum:Markov} as in \cite{KAF17}, i.e., we assume that 
\begin{equation}
    I\left(\mathtt{X}_2:\mathcal{A}^N\mathcal{B}^N|\mathcal{X}^N\mathcal{Y}^NE\Lambda \right)=0.
\end{equation}
Recall that the Markov model is the assumption that the two sources $\mathtt{X}_2$ and $\mathcal{A}^N\mathcal{B}^N$ with the side information (and previous bits from the min-entropy source) $\mathcal{X}^N\mathcal{Y}^NE\Lambda$ form a Markov chain. 

We now use the following Lemmas proven in \cite{FPS16} in conjunction with the Raz extractor from Thm. \ref{thm-second-ext}.
\begin{lemma}[\cite{FPS16}]
Any $(k_1, k_2, \epsilon)$-strong two-source extractor is a $\left(k_1 + \log \frac{1}{\epsilon}, k_2 + \log \frac{1}{\epsilon}, \sqrt{3 \epsilon 2^{m-2}}\right)$-strong quantum-proof two-source extractor in the Markov model, where $m$ is the extractor output length. 
\end{lemma}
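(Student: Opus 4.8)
The plan is to deduce the Markov-model quantum-proof statement from the \emph{classical} performance of $\text{Ext}$, absorbing the cost of the quantum-to-classical passage into the advertised loss: an extra $\log\frac1\epsilon$ bits of min-entropy on each source, and a factor $\Theta(\sqrt{2^m})$ in the trace distance (harmless here since $m = O(\log n)$). Fix a $ccq$ state $\rho_{X_1X_2E}$ with $I(X_1:X_2\mid E)_\rho = 0$, $H_{\min}(X_1\mid E)_\rho \ge k_1 + \log\frac1\epsilon$ and $H_{\min}(X_2\mid E)_\rho \ge k_2 + \log\frac1\epsilon$; put $Z = \text{Ext}(X_1,X_2)$ and let $\sigma_{ZX_2E}$ be the induced state. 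I want $\frac12\,\| \sigma_{ZX_2E} - \rho_{Um}\otimes\rho_{X_2E}\|_1 \le \sqrt{3\epsilon\,2^{m-2}}$, the statement strong in $X_1$ following symmetrically.

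The first move is \emph{not} to replace $E$ by a measurement outcome and invoke data processing — that would shrink the distance in the wrong direction — but to route through a Hilbert--Schmidt estimate. Since $Z$ is a classical register of dimension $2^m$, one has $\|\sigma_{ZX_2E} - \rho_{Um}\otimes\rho_{X_2E}\|_1 \le \sqrt{2^m}\,\|\sigma_{ZX_2E} - \rho_{Um}\otimes\rho_{X_2E}\|_{2,\,\rho_{X_2E}}$, the $\rho_{X_2E}$-weighted $2$-norm being used so that only the classical register $Z$ contributes to the rank factor. Expanding the weighted $2$-norm turns the right-hand side into a conditional collision quantity that involves $\sigma$ only through pairwise overlaps of the branch states $\sigma^z_{X_2E}$, i.e.\ through a conditional collision entropy $H_2(Z\mid X_2E)$, and it remains to show $2^{-H_2(Z\mid X_2E)} \le 2^{-m}\bigl(1 + O(\epsilon\,2^m)\bigr)$. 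Here the Markov hypothesis and the $\log\frac1\epsilon$ slack enter: using $I(X_1:X_2\mid E) = 0$ one writes the overlap as an expectation over a spectral/purification decomposition of $E$ under which $X_1$ and $X_2$ appear as conditionally \emph{independent} classical sources; smoothing away an $\epsilon$-weight of bad branches, the two sources retain min-entropies $\ge k_1$ and $\ge k_2$ on every remaining branch, so there $Z = \text{Ext}(X_1,X_2)$ is a function of two genuinely independent sources meeting the thresholds and the classical $(k_1,k_2,\epsilon)$-strong guarantee bounds the trace distance of $Z$ from uniform given $X_2$ by $\epsilon$; a classical computation ($\epsilon$-closeness to uniform on $2^m$ points $\Rightarrow$ collision probability $\le 2^{-m} + 2\epsilon$) converts this into the collision bound on each good branch, the bad branches add their $\le\epsilon$ weight, and re-averaging gives $\|\sigma_{ZX_2E} - \rho_{Um}\otimes\rho_{X_2E}\|_{2,\,\rho_{X_2E}}^2 \le 3\epsilon$; plugging back yields $\sqrt{3\epsilon\,2^{m-2}}$.

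The crux is that middle link: controlling the quantum conditional collision entropy $H_2(Z\mid X_2E)$ by the classical extractor error \emph{without} ever letting $\dim E$ enter the estimate. The Markov hypothesis is exactly what makes this possible — it lets the $E$-side of the collision quantity be decomposed into branches on which the two sources are independent and the classical guarantee is literally applicable — while the two $\log\frac1\epsilon$ terms are spent ensuring that \emph{both} sources clear their min-entropy bars on the same branch except with probability $\epsilon$. Once the two-source step is established, strongness in $X_1$ is the mirror-image argument, and the lemma then plugs into Protocol~I with $\mathcal A^N\mathcal B^N$ (entropy-accumulated to $\Omega(m')$ bits, Proposition~\ref{prop:rand-acc}) and $\mathtt X_2$ in the roles of the two sources.
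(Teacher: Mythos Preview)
The paper does not prove this lemma at all: it is quoted as a black-box result from \cite{FPS16} (restated once more as a theorem in the Supplemental Material) and then fed directly into Proposition~\ref{prop:soundness}. So there is no in-paper argument to compare your sketch against; what you have written is effectively a sketch of the original \cite{FPS16} proof, and at the level of strategy it is the right one --- pass to the $\rho_{X_2E}$-weighted $2$-norm to pick up the $\sqrt{2^m}$ factor, reduce to a conditional collision bound, decompose via the quantum Markov structure, spend the two $\log(1/\epsilon)$ terms on a min-entropy splitting so that both sources clear their thresholds on all but an $\epsilon$-weight of branches, and invoke the classical extractor on the good branches.

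The one place your sketch is too quick is the sentence ``\dots a spectral/purification decomposition of $E$ under which $X_1$ and $X_2$ appear as conditionally independent \emph{classical} sources''. The Hayden--Jozsa--Petz--Winter decomposition that the Markov condition gives you produces, for each block $j$, a \emph{product} state $\rho^{\,j}_{X_1E_1}\otimes\rho^{\,j}_{X_2E_2}$: the two sources become independent of each other, but each still carries its own \emph{quantum} side information $E_i^{\,j}$. The bare classical $(k_1,k_2,\epsilon)$ guarantee does not apply to such a branch, so ``apply the classical guarantee on each good branch and convert $\epsilon$-closeness to a collision bound'' is not yet justified. What makes the argument close --- and what \cite{FPS16} actually does --- is that the weighted collision quantity you set up behaves well under this product structure: after fixing $j$ and $x_2$, the $E_2^{\,j}$ piece factors out, and the remaining $E_1^{\,j}$-dependence can be controlled by the same $2$-norm machinery (this is precisely why the collision/$2$-norm route succeeds where a direct measurement-then-data-processing argument, which you rightly rejected, would not). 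Filling in that reduction from product-quantum side information to the classical extractor hypothesis is the substantive missing link in your write-up.
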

\begin{lemma}[\cite{FPS16}]
\label{lem:fin-out-dist}
Let $\text{Ext}:\{0,1\}^n \times \{0,1\}^d \rightarrow \{0,1\}^m$ be a $(k_1, k_2, \epsilon)$ quantum-proof two-source extractor in the Markov model, strong in the source $X_i$. For any Markov state $\rho_{X_1X_2C}$ with $H_{\min}^{\epsilon_s}(X_1|C)_{\rho} \geq k_1 + \log \frac{1}{\epsilon} + 1$ and $H_{\min}(X_2|C)_{\rho} \geq k_2 + \log \frac{1}{\epsilon} + 1$, it holds that
\begin{equation}
\frac{1}{2} \big\| \rho_{\text{Ext}(X_1,X_2)X_iC} - \rho_{U_m} \otimes \rho_{X_iC} \big\| \leq 6(\epsilon_s + \epsilon).
\end{equation}
\end{lemma}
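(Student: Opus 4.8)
The plan is to reduce this smooth‑min‑entropy statement to the already‑available non‑smooth definition of a quantum‑proof two‑source extractor by a standard smoothing‑plus‑triangle‑inequality argument, the only genuinely delicate point being that the smoothing has to be carried out without destroying the Markov structure. First I would unpack the first hypothesis: by definition of the smooth min‑entropy there is a subnormalised state $\tilde\rho_{X_1X_2C}\in B^{\epsilon_s}(\rho_{X_1X_2C})$ (ball in purified distance) with $H_{\min}(X_1|C)_{\tilde\rho}\geq k_1+\log\frac1\epsilon+1$. Before the quantum‑proof‑in‑Markov property can be applied to $\tilde\rho$, two defects must be repaired: $\tilde\rho$ may be subnormalised, and it need not be a classical‑classical‑quantum Markov state.

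For the repair I would use the Hayden--Jozsa--Petz--Winter structure of the Markov state $\rho$: the system $C$ decomposes as $\bigoplus_j C_j^L\otimes C_j^R$ with $\rho_{X_1X_2C}=\bigoplus_j q_j\,\rho_{X_1C_j^L}\otimes\rho_{C_j^RX_2}$. One takes the optimiser for $H_{\min}(X_1|C)$ of the marginal $\rho_{X_1C}$, lifts it back along this decomposition leaving the $C_j^RX_2$ factors untouched, and renormalises; this yields a normalised ccq Markov state that is still $O(\epsilon_s)$‑close to $\rho$ in (generalised) trace distance. The slack $\log\frac1\epsilon+1$ in the hypothesis is precisely what absorbs the drop in $H_{\min}(X_1|C)$ caused by renormalisation and by any coarsening forced by the decomposition, so the repaired state still satisfies $H_{\min}(X_1|C)_{\tilde\rho}\geq k_1$; and because the $X_2$‑side factors are left intact, the same slack on the second hypothesis guarantees $H_{\min}(X_2|C)_{\tilde\rho}\geq k_2$. (If the paper instead invokes a ready‑made "smoothing in the Markov model" lemma from \cite{FPS16}, this entire step collapses to a citation.)

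With a normalised Markov state $\tilde\rho$ satisfying $H_{\min}(X_1|C)_{\tilde\rho}\geq k_1$ and $H_{\min}(X_2|C)_{\tilde\rho}\geq k_2$ in hand, the definition of a $(k_1,k_2,\epsilon)$ quantum‑proof two‑source extractor strong in $X_i$ gives directly $\frac12\big\|\tilde\rho_{\text{Ext}(X_1,X_2)X_iC}-\rho_{U_m}\otimes\tilde\rho_{X_iC}\big\|\leq\epsilon$. I would then finish with the triangle inequality: applying the CPTP map $\text{Ext}\otimes\mathrm{id}$ to the pair $(\rho,\tilde\rho)$ and using monotonicity of trace distance under channels bounds $\big\|\rho_{\text{Ext}(X_1,X_2)X_iC}-\tilde\rho_{\text{Ext}(X_1,X_2)X_iC}\big\|$ by $O(\epsilon_s)$, and partial trace gives the analogous bound on $\big\|\rho_{U_m}\otimes\rho_{X_iC}-\rho_{U_m}\otimes\tilde\rho_{X_iC}\big\|$; summing the three contributions, and keeping track of the constants incurred in passing between purified distance, generalised trace distance, and renormalisation, yields the stated bound $6(\epsilon_s+\epsilon)$.

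The main obstacle, and the part I expect to need real care rather than bookkeeping, is the second step: producing a smoothing state that simultaneously (i) lies in the purified‑distance ball, (ii) is a normalised ccq state, (iii) retains the Markov condition $I(X_1:X_2|C)=0$, and (iv) keeps enough min‑entropy in both $X_1|C$ and $X_2|C$. Plain smoothing of the marginal $\rho_{X_1C}$ secures none of (ii)--(iv) for free — min‑entropy of $X_2|C$ can in fact collapse under an arbitrarily small perturbation — and it is exactly the cost of fixing these that forces the hypotheses to carry the extra $\log\frac1\epsilon+1$ bits and the error to be inflated to a multiple of $\epsilon_s+\epsilon$.
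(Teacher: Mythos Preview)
The paper does not give its own proof of this lemma: it is simply quoted as a result from \cite{FPS16} and then invoked in the proof of Proposition~\ref{prop:soundness}. There is therefore nothing in the present paper to compare your proposal against.

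That said, your outline is the standard and correct strategy for such a statement: smooth to a nearby state with non-smooth min-entropy at least $k_1$, repair it to a normalised ccq Markov state (this is where the Hayden--Jozsa--Petz--Winter decomposition and the extra $\log\frac{1}{\epsilon}+1$ slack are used), apply the defining property of the extractor, and return to $\rho$ via the triangle inequality and monotonicity of trace distance under CPTP maps. You have correctly flagged the only substantive step, namely performing the smoothing so that the Markov condition and the min-entropy of $X_2|C$ survive; this is indeed the content of a dedicated lemma in \cite{FPS16}, and in a self-contained write-up you would either reproduce that argument or cite it.
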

We thus obtain the following statement.

\begin{proposition}
\label{prop:soundness}
Let $\text{Ext}_{\text{R}}: \{0,1\}^N \times \{0,1\}^{n} \rightarrow \{0,1\}^M$ be Raz's $(k_1, k_2, \epsilon_{\text{Ext}})$ quantum-proof two-source extractor in the Markov model (Thm. \ref{thm-second-ext}), strong in the second input, such that
\begin{eqnarray}
k_1 & \geq & (m'/2) \cdot g\left(\gamma, \epsilon_{EA}, \delta, m'/2, \epsilon \right) - \log(1/\epsilon_{\text{Ext}}) - 1 \nonumber \\
k_2 &\geq & \left(\frac{1}{2} + \delta'' \right)n + 3 \log n + \log N - \log(1/\epsilon_{\text{Ext}}) - 1,
\end{eqnarray}
with $n \geq 6 \log n + 2 \log N$ and $0 < \delta'' < 19/32$. 
Consider the Protocol I using $\text{Ext}_{\text{R}}$ and $\epsilon_{EA}, \gamma \in (0,1)$. Then either the Protocol I aborts with probability $\geq 1 - \epsilon_{EA}$ or for the output $R$ of the extractor together with the side information  $\Sigma = \mathcal{X}^N\mathcal{Y}^NE\Lambda$ it holds that 
\begin{equation}
\frac{1}{2} \big\| \rho_{R\Sigma} - \rho_{U_M} \otimes \rho_{\Sigma} \big\| \leq 6(\gamma + \epsilon_{\text{Ext}}).
\end{equation}
\end{proposition}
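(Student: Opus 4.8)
The plan is to derive Proposition~\ref{prop:soundness} by feeding the entropy accumulation bound of Proposition~\ref{prop:rand-acc} into the two‑source extractor Lemma~\ref{lem:fin-out-dist} of \cite{FPS16}, after checking that both of Raz's extractor inputs meet the required min‑entropy thresholds conditioned on the non‑abort event and that the Markov structure of Assumption~\ref{assum:Markov} is preserved under that conditioning. Concretely, I would set the first source to be the device outputs $X_1 = \mathcal{A}^N\mathcal{B}^N$, the second source to be $X_2 = \mathtt{X}_2$, the conditioning register to be $C = \Sigma = \mathcal{X}^N\mathcal{Y}^N E\Lambda$, the smoothing parameter to be $\epsilon_s = \gamma$ and the extractor error to be $\epsilon = \epsilon_{\text{Ext}}$.

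First I would dispose of the trivial alternative: fix $\epsilon_{EA},\gamma\in(0,1)$ and assume the protocol does not abort with probability $\geq 1-\epsilon_{EA}$ (otherwise there is nothing to prove). Proposition~\ref{prop:rand-acc} then gives $H_{\min}^{\gamma}(\mathcal{A}^N\mathcal{B}^N|\mathcal{X}^N\mathcal{Y}^N E\Lambda)_{\rho_{|\overline{\perp}}}\geq (m'/2)\,g(\gamma,\epsilon_{EA},\delta,m'/2,\epsilon)$, where $(m'/2)\,g(\cdots)$ denotes the explicit EAT lower bound established in Section~\ref{sec:Entropy-Acc} (which is $\Omega(m')$). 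By the hypothesis $k_1\geq (m'/2)\,g(\cdots)-\log(1/\epsilon_{\text{Ext}})-1$, this is exactly the first requirement of Lemma~\ref{lem:fin-out-dist}, namely $H_{\min}^{\gamma}(\mathcal{A}^N\mathcal{B}^N|\Sigma)_{\rho_{|\overline{\perp}}}\geq k_1+\log(1/\epsilon_{\text{Ext}})+1$.

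Next I would control the second source. The inputs $\mathcal{X}^N\mathcal{Y}^N$ of the Bell test are, by Steps 1--2 of Protocol~I, a fixed deterministic function of $\mathtt{X}_1$, and by the setup the purification $E$ of the device is independent of the source conditioned on $\Lambda$; hence $H_{\min}(\mathtt{X}_2|\mathcal{X}^N\mathcal{Y}^N E\Lambda)\geq H_{\min}(\mathtt{X}_2|\mathtt{X}_1\Lambda)\geq(\tfrac12+\delta')n$ by the block‑source Assumption~\ref{assum:Markov} (Assumption 4). Moreover the non‑abort event $\overline{\perp}$ is a function of $\mathcal{A}^N\mathcal{B}^N$ and $\mathcal{X}^N\mathcal{Y}^N$, and the Markov assumption $I(\mathtt{X}_2:\mathcal{A}^N\mathcal{B}^N|\mathcal{X}^N\mathcal{Y}^N E\Lambda)=0$ implies that conditioning on $\overline{\perp}$ leaves the conditional state of $\mathtt{X}_2$ given $\Sigma$ (hence its conditional min‑entropy) unchanged, and still preserves the Markov chain $\mathcal{A}^N\mathcal{B}^N\leftrightarrow\Sigma\leftrightarrow\mathtt{X}_2$. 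Combining the hypothesis $k_2\geq(\tfrac12+\delta'')n+3\log n+\log N-\log(1/\epsilon_{\text{Ext}})-1$ with $\delta''<\delta'$ and $\log N=O(\log n)$, the term $(\delta'-\delta'')n$ dominates the logarithmic corrections for large $n$, so $H_{\min}(\mathtt{X}_2|\Sigma)_{\rho_{|\overline{\perp}}}\geq(\tfrac12+\delta')n\geq k_2+\log(1/\epsilon_{\text{Ext}})+1$, which is the second requirement of Lemma~\ref{lem:fin-out-dist}. (The side condition $n\geq 6\log n+2\log N$ is precisely what Thm~\ref{thm-second-ext} needs for Raz's extractor to exist with these parameters.)

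Finally, since $\text{Ext}_{\text{R}}$ is a $(k_1,k_2,\epsilon_{\text{Ext}})$ quantum‑proof two‑source extractor in the Markov model, strong in the second input, and $\rho_{|\overline{\perp}}$ is a Markov state meeting both entropy thresholds, Lemma~\ref{lem:fin-out-dist} applied with $X_i=X_2=\mathtt{X}_2$ yields $\tfrac12\|\rho_{R\mathtt{X}_2\Sigma}-\rho_{U_M}\otimes\rho_{\mathtt{X}_2\Sigma}\|\leq 6(\gamma+\epsilon_{\text{Ext}})$ for $R=\text{Ext}_{\text{R}}(\mathcal{A}^N\mathcal{B}^N,\mathtt{X}_2)$; tracing out $\mathtt{X}_2$ and using monotonicity of the trace distance under partial trace gives $\tfrac12\|\rho_{R\Sigma}-\rho_{U_M}\otimes\rho_{\Sigma}\|\leq 6(\gamma+\epsilon_{\text{Ext}})$, which is the claim. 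I expect the one genuinely delicate point to be the bookkeeping in the previous paragraph — verifying that each object placed in the side register $\Sigma$ (the Trevisan‑derived inputs, the quantum purification $E$, and, implicitly, the conditioning on non‑abort) leaves the min‑entropy of $\mathtt{X}_2$ essentially intact and keeps $\rho_{|\overline{\perp}}$ inside the Markov model, so that the hypotheses of Lemma~\ref{lem:fin-out-dist} genuinely hold; the remaining steps are direct substitutions of the stated parameter bounds.
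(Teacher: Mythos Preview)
Your proposal is correct and follows essentially the same approach as the paper: invoke Proposition~\ref{prop:rand-acc} for the first source, use the block-source assumption to bound $H_{\min}(\mathtt{X}_2|\Sigma)$ for the second, check the Markov property, and apply Lemma~\ref{lem:fin-out-dist}. If anything, your version is more careful than the paper's own proof --- you explicitly track the effect of conditioning on $\overline{\perp}$ on both the Markov structure and the min-entropy of $\mathtt{X}_2$, and you spell out the final partial-trace step, points the paper leaves implicit.
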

\begin{proof}
From Prop. \ref{prop:rand-acc}, we know that either the protocol aborts with probability $\geq 1 - \epsilon_{EA}$ or the smooth entropy of the outputs is lower bounded by $(m'/2) \cdot g\left(\gamma, \epsilon_{EA}, \delta, m'/2, \epsilon \right)$. Since $N = 2^d \cdot (m'/2)$ with $d, m = O(\log n)$ we have that $N = O(n \log n)$ so that the requirement $n \geq 6 \log n + 2 \log N$ is met. Also, we have $m', d = O(\log n)$ with $m' = c_2 \cdot d$ for a constant $c_2 > 1$. So that the requirement in Thm. \ref{thm-second-ext} of $k_1 \geq \frac{163}{32} \log \left( \left(1 + \frac{3 \delta'}{19} \right) n - k_2 \right)$ is met for a suitable constant $c_2 > 1$.

From the requirement on the min-entropy source we have that $H_{\min}(\mathtt{X}_2|\mathcal{X}^N\mathcal{Y}^NE\Lambda)\geq  \left(\frac{1}{2} + \delta' \right) n  \geq \left(\frac{1}{2} + \delta'' \right) n + 3 \log n + \log N$ for suitable constants $0<\delta'' < \delta' < 19/32$. By assumption the state of the source, devices and Eve obeys the Markov property so that the independent-source extractor from Thm. \ref{thm-second-ext} can be applied. From Lemma \ref{lem:fin-out-dist} \cite{FPS16} we have that the $M$ output bits $R$ from the extractor are $6(\gamma + \epsilon_{\text{Ext}})$ away from uniform. 

\end{proof}

%

\section{Security}
We are now ready to state the soundness statement regarding the distance from uniform of the final output bits of the Protocol I. We use the standard definition of composable security where conditioned upon the protocol not aborting, the final output bits $R$ are $\epsilon_f$-close to uniform. Formally, we have
\begin{definition}
The amplification protocol that outputs a bit string $R$ of length $M$ is said to be $\epsilon_f$-secret against an adversary holding side information $\mathcal{E}$ if
\begin{equation}
\label{eq:soundness}
\left(1 - \text{Pr}\left[\perp \right]\right) \big\| \rho_{R\mathcal{E}} - \rho_{U^{M}} \otimes \rho_{\mathcal{E}} \big\| \leq \epsilon_f,
\end{equation}
where $\rho_{U_M}$ denotes the fully mixed state of dimension $2^M$, and $\text{Pr}\left[\perp\right]$ denotes the probability that the protocol aborts.
\end{definition}

\begin{theorem}
\label{thm:soundness}
For any $\epsilon_{EA}, \gamma \in (0,1)$ the Protocol I is $\epsilon_f$ secret with $\epsilon_f = 12 \left(\epsilon_{\text{Ext}} + \gamma \right) + \epsilon_{EA}$.
\end{theorem}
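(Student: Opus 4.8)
The statement is essentially a corollary of Proposition~\ref{prop:soundness}, obtained by a case distinction on the abort probability. The plan is to identify the adversary's side information $\mathcal{E}$ at the end of the protocol with (a register contained in) $\Sigma = \mathcal{X}^N\mathcal{Y}^N E\Lambda$ --- the inputs used in the Bell test, which are a deterministic function of the first source block $\mathtt{X}_1$; the purification $E$ of the device; and the classical source side information $\Lambda$ --- together with the publicly announced test statistics $\mathcal{D}^N$, and then to bound the left-hand side of the soundness condition~\eqref{eq:soundness} in each of the two cases supplied by Proposition~\ref{prop:soundness}.

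First, fix $\epsilon_{EA},\gamma\in(0,1)$ and check that the hypotheses of Proposition~\ref{prop:soundness} hold for the source and parameters of Protocol~I: the source is an $(n,2,k_1,k_2)$-block source with $k_1=O(n^{\alpha})$ and $k_2\geq(\tfrac12+\delta')n$, while $N=2^d\cdot(m'/2)=O(n\log n)$ with $d,m'=O(\log n)$ and $m'=c_2 d$ for a constant $c_2>1$, so that $n\geq 6\log n+2\log N$ and the Raz-extractor constraint $k_1\geq\tfrac{163}{32}\log\!\big((1+\tfrac{3\delta'}{19})n-k_2\big)$ are met for a suitable $c_2$. Proposition~\ref{prop:soundness} then gives the dichotomy: either $\Pr[\perp]\geq 1-\epsilon_{EA}$, or, conditioned on $\overline{\perp}$, $\tfrac12\big\|\rho_{R\Sigma}-\rho_{U_M}\otimes\rho_\Sigma\big\|\leq 6(\gamma+\epsilon_{\text{Ext}})$, i.e.\ $\big\|\rho_{R\Sigma}-\rho_{U_M}\otimes\rho_\Sigma\big\|\leq 12(\gamma+\epsilon_{\text{Ext}})$.

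In the abort case $1-\Pr[\perp]\leq\epsilon_{EA}$, and since the distance between two normalized states is bounded (trace norm of the difference at most $2$, trace distance at most $1$), the product $\big(1-\Pr[\perp]\big)\big\|\rho_{R\mathcal{E}}-\rho_{U^M}\otimes\rho_\mathcal{E}\big\|$ is at most $\epsilon_{EA}$ (up to the trace-norm normalization absorbed into the statement of Proposition~\ref{prop:soundness}). In the no-abort case $1-\Pr[\perp]\leq 1$ and, by monotonicity of the trace distance under the partial trace sending $\Sigma$ to $\mathcal{E}$ --- noting that $H_{\min}(\mathtt{X}_2\mid \mathtt{X}_1,\Lambda)\geq k_2$ by the block-source property, so that revealing all of $\mathtt{X}_1$ does not invalidate the entropy hypotheses of Proposition~\ref{prop:soundness} --- the product is at most $12(\gamma+\epsilon_{\text{Ext}})$. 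Since the two cases are exhaustive, $\epsilon_f\leq\max\{\epsilon_{EA},\,12(\gamma+\epsilon_{\text{Ext}})\}\leq 12(\epsilon_{\text{Ext}}+\gamma)+\epsilon_{EA}$, which is the claim.

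The argument is largely mechanical, and the place that needs genuine care --- where I expect the subtlety to sit --- is the bookkeeping between the conditional-on-$\overline{\perp}$ statement of Proposition~\ref{prop:soundness} and the sub-normalized form of the composable-security definition~\eqref{eq:soundness}, including the factor-of-two conventions between the trace norm $\|\cdot\|$ and the trace distance $\tfrac12\|\cdot\|$ that propagate through Lemma~\ref{lem:fin-out-dist} and Proposition~\ref{prop:soundness}. One must also confirm that $\Sigma$ (enlarged, if necessary, by $\mathcal{D}^N$ and by $\mathtt{X}_1$ itself) dominates every register the adversary can hold at the end of the protocol, so that no extra leakage term beyond those already present in Proposition~\ref{prop:soundness} is incurred.
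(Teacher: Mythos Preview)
Your proposal is correct and follows essentially the same approach as the paper: a two-case distinction on whether $\Pr[\perp]\geq 1-\epsilon_{EA}$, invoking Proposition~\ref{prop:soundness} in the non-aborting case to bound the trace distance by $12(\gamma+\epsilon_{\text{Ext}})$, and bounding $(1-\Pr[\perp])\leq\epsilon_{EA}$ in the aborting case. The paper's own proof is a two-sentence version of exactly this argument; your additional commentary on verifying the hypotheses of Proposition~\ref{prop:soundness}, identifying $\mathcal{E}$ with (a sub-register of) $\Sigma$, and tracking the factor-of-two norm conventions is sound extra bookkeeping that the paper simply omits.
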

\begin{proof}
We see that in case the protocol aborts with probability $\geq 1 - \epsilon_{EA}$, that $\left(1 - \text{Pr}\left[\perp \right]\right) \leq \epsilon_{EA}$ so that Eq.(\ref{eq:soundness}) holds for $\epsilon_f = 12 \left(\epsilon_{\text{Ext}} + \gamma \right) + \epsilon_{EA}$. When the protocol aborts with probability $< 1 - \epsilon_{EA}$ we have that $ \big\| \rho_{R\mathcal{E}} - \rho_{U^{M}} \otimes \rho_{\mathcal{E}} \big\|  \leq 12 \left(\epsilon_{\text{Ext}} + \gamma \right)$ from Prop. \ref{prop:soundness} so that again Eq.(\ref{eq:soundness}) holds for $\epsilon_f = 12 \left(\epsilon_{\text{Ext}} + \gamma \right) + \epsilon_{EA}$.

\end{proof}

We turn to the notion of Completeness, i.e., that there exists an honest implementation in which the protocol aborts with negligible probability. If the device $D = (D_1, D_2)$ implements $N$ independent, ideal measurements on the ideal state $|\psi_H^{\otimes N} \rangle$ for the Hardy paradox, the expected average value of the MDL-Hardy parameter $L_{m'}^j = \frac{2}{m'}\sum_{k_j = 1}^{m'/2} M_{\epsilon}^{j,k_j}$ in each of the $j \in [2^d]$ blocks is at least $G_{\text{exp}}$. We can calculate the probability that $L_{m'}^j < \delta$ for constant $\delta > 0$ and any $j$ using the Azuma-Hoeffding inequality as
\begin{equation}
\text{Pr}\left[L_{m'}^j < \delta \right] = \text{Pr}\left[\left(G_{\text{exp}}-L_{m'}^{j} \right) > \left(G_{\text{exp}} - \delta \right) \right] \leq \exp\left[-m' (G_{\text{exp}} - \delta)^2/3 \right]. 
\end{equation}
The probability that the protocol aborts for an honest implementation can then be calculated as
\begin{eqnarray}
\text{Pr}\left[\perp \right] = \text{Pr}\left[ \vee_{j=1}^{2^d} \left(L_{m'}^j < \delta \right) \right] \leq 2^d   \exp\left[-m' (G_{\text{exp}} - \delta)^2/3 \right],
\end{eqnarray}
where we have used the union bound. Since we have $m' = c_2 \cdot d$ for $c_2 > 1$ we can choose sufficiently large $c_2$ such that $\text{Pr}\left[\perp \right] = \exp\big\{-O(\log n)\big\} \rightarrow 0$ so that the probability that the protocol aborts for an honest implementation is negligible.

\section{Discussion}
We have presented a protocol for quantum randomness amplification of a two-block min-entropy source where each of the blocks has sufficiently high min-entropy. A few points regarding the structure of the protocol and the requirements on the input min-entropy must be considered for future improvements. In \cite{BCK13}, a crucial weakness of device-independent protocols that rely on public communication between separated devices was pointed out. Namely, that untrusted devices may record their inputs and outputs in their memory and reveal information about previous outputs via publicly disclosed outputs during later runs. In view of this it is natural to consider why the strategy pursued here, wherein input randomness is present in one of $2^d$ blocks, can work. This is where the Markov chain assumption plays a role, namely the inputs $X_l, Y_l$ in the $l$-th round of the protocol for $l \in [N]$ do not reveal any new information about the previous outcomes $\mathcal{A}^{l-1}\mathcal{B}^{l-1}$ other than what was already available through the previous inputs $\mathcal{X}^{l-1}\mathcal{Y}^{l-1}$ and through the side information $E, \Lambda$. In other words, the weak source does not update its state depending on previous outcomes from the device. 

We also see that it is not sufficient to test for the overall Bell violation over all the $N$ runs and one has to rather test for the Bell violation in each of the $2^d$ blocks. This is because an overall Bell violation over $N$ runs may be simulated by a classical device, since $2^d - 1$ blocks may not contain random inputs. A typical testing for the Bell parameter through a concentration inequality such as the Azuma-Hoeffding inequality would be unable to pick out the small fraction of runs which do contain (close-to)-uniform inputs. 

In the protocol, we use Raz's independent-source extractor since it works when one of the sources has $O(\log N)$ entropy. Other constructions of extractors are known such as Li's extractor that works with two sources of polylogarithmic entropy, i.e., $O(\log^C N)$ for a large constant $C$. However, since the outputs of the device $\mathcal{A}^N\mathcal{B}^N$ have only randomness in $O(\log N)$ of the runs, the present structure of the protocol does not allow to use such extractors. This means that the second block $\mathtt{X}_2$ of the min-entropy source must have high min-entropy $k_2$ at least $\left(1/2 + \delta' \right)n$ to compensate for the low randomness certified from the device outputs. This is also why we use the seeded extractor with the smallest seed length $d = O(\log n)$ and correspondingly require the first block $\mathtt{X}_1$ of the min-entropy source to have sufficiently high min-entropy $k_1 = O(n^{\alpha})$ for $0< \alpha \leq 1$. 


\subsection{Bell tests with arbitrary min-entropy sources}
While the analyses in the previous sections have focused on extracting randomness from (two-)block min-entropy sources, they can also be used to show that Bell tests may be performed with single block sources of arbitrary min-entropy, overcoming a no-go result shown in \cite{TSS13}. Specifically in \cite{TSS13}, it was shown that given classical side information $\Lambda$ about a source, if the inputs $\mathcal{X}^N, \mathcal{Y}^N$ in a two-party $N$-round Bell test (with $m_A$ inputs for Alice and $m_B$ for Bob) inputs have min-entropy $H_{\min}\left(\mathcal{X}^N\mathcal{Y}^N | \Lambda \right) \leq N \log(m_A + m_B - 1)$ then no conclusion can be drawn from the Bell test since the no-signalling limit of the inequality can be saturated by local deterministic behaviours. From this (and the fact that at minimum $m_A = m_B = 2$ for a Bell test), it was deduced that any source of randomness with $H_{\min}(\mathcal{X}^N\mathcal{Y}^N | \Lambda) \leq N \log 3$ is useless as a source for Bell tests. We here show that the pre-processing of the source to produce a somewhere-random-source can be used to overcome this limitation, specifically any weak source of randomness of arbitrary min-entropy can be used for a Bell test. 

To do this, we will build the somewhere-random-source using the extractor in Thm. \ref{thm-first-step-1} from the given min-entropy source $\mathtt{X} \in \{0,1\}^n$ of arbitrary min-entropy $H^{\epsilon'}(\mathtt{X} | \Lambda) \geq k$. This extractor uses a seed of length $d = O(\log^3 n)$ and produces $m = k - 4 \log\left(\frac{1}{\epsilon}\right) - O(1)$ bits of randomness with an error $\epsilon + 2 \epsilon'$ for $\epsilon = \text{poly}\left(\frac{1}{n}\right)$. As in Section \ref{sec:SRS-Bell}, we choose an output length $m' = O(\log n)$ such that the output of the extractor has full support on this output, that is, it outputs each of the $m'$-bit strings with positive probability. For the Bell test, we do not care that $m'$ is of order $\log N$ where $N := 2^d \cdot m'/2$ since we do not aim to use the output bits as further inputs to an independent-source extractor. As shown in Sec. \ref{sec:MDL-Estimation}, any observed average violation of the MDL quantity $L_{m'}^j = \frac{2}{m'} \sum_{k_j=1}^{m'/2} \overline{M}^{j,k_j}_{\epsilon} \geq \delta$ guarantees that with high probability in at least a linear fraction of the runs within the $j$-th block, it holds that the true  MDL-Hardy parameter satisfies $\overline{M}^{j,k_j}_{\epsilon} \geq \delta/4 > 0$ which allows to detect non-locality. And this procedure works for any min-entropy source with min-entropy of order at least $\Omega(\log n)$.

\section{Acknowledgments}
Useful discussions with Pawe{\l} Horodecki, Stefano Pironio and Yuan Liu are acknowledged. R. R. acknowledges support from the Early Career Scheme (ECS) grant "Device-Independent Random Number Generation and Quantum Key Distribution with Weak Random Seeds" (Grant No. 27210620), the General Research Fund (GRF) grant "Semi-device-independent cryptographic applications of a single trusted quantum system" (Grant No. 17211122) and the Research Impact Fund (RIF) "Trustworthy quantum gadgets for secure online communication" (Grant No. R7035-21).


\section{Supplemental Material.}

\subsection{Randomness Extractors}
A (seeded) randomness extractor is a function $\text{Ext}$ that takes as input a weakly random source $X$ together with a uniformly distributed short seed $Y$ and outputs a string $Z$, such that $Z$ is almost uniformly distributed from the point of view of any adversary whenever the min-entropy of $X$ is greater than some threshold value $k$. The adversarial side information $E$ may be represented by the state of a classical or quantum system, the requirement of $Z$ being almost uniform conditioned on quantum $E$ is known to be strictly stronger than when $E$ is classical. The extractor is said to be strong if the output $Z$ is almost independent of the seed. See \cite{KR11, DPVR12} for a discussion on extractors.
\begin{definition}(Strong Extractor).
Let $n, d, m \in \mathbb{N}$, $0 \leq k \leq n$ and $\epsilon \in [0,1]$.
A function $\text{Ext:} \{0,1\}^n\times\{0,1\}^d\rightarrow\{0,1\}^m$ is a quantum proof $(k,\epsilon)-$strong extractor with uniform seed, if for all classical-quantum-states $\rho_{XE}$ with $H_{\min}(X|E)\geq k$ and uniform seeds $Y$ on $\{0,1\}^d$ independent of $\rho_{XE}$ we have 
\begin{equation}
    \frac{1}{2} \big\| \rho_{\text{Ext}(X,Y)YE}-\rho_{Um}\otimes\rho_Y\otimes\rho_E \big\| \leq\epsilon
\end{equation}
where $\rho_{Um}$ is the fully mixed state on a system of dimension $2^m$.
\end{definition}

Trevisan's extractor \cite{T01} is based on a construction that encodes the source $X$ using a list-decodable code $C$. The output of the extractor consists of certain bits of $C(X)$ specified by the seed and a construction called a weak design. 
\begin{definition}(Weak design).
A family of sets $R_1,\ldots,R_m\subset[d]$ is a weak $(t,r)$-design if
\begin{enumerate}
    \item  For all $i, |R_i|=t$. 
    \item  For all $i$,$\sum_{j=1}^{i-1} 2^{|R_i\cap R_j|}\leq rm$.
\end{enumerate}
\end{definition}
De et al. in \cite{DPVR12} prove that Trevisan's construction gives a quantum-proof strong extractor.
\begin{definition}(Trevisan's Extractor).
For a one-bit extractor $C:\{0,1\}^n\times\{0,1\}^t\rightarrow\{0,1\}$ which uses a seed of length $t$, and for a weak $(t,r)$-design $R_1,\ldots,R_m\subset[d]$ we define the m-bit extractor $\text{Ext}_c:\{0,1\}^n\times\{0,1\}^d\rightarrow\{0,1\}^m$
as
\begin{equation}
    \text{Ext}_c(x,y):=C(x,y_{R_1})\ldots C(x,y_{R_m})
\end{equation}
\end{definition}

\begin{theorem}\cite{DPVR12}.
Let $C:\{0,1\}^n\times\{0,1\}^t\rightarrow\{0,1\}$ be a $(k,\epsilon)$-strong extractor with uniform seed and $R_1,\ldots,R_m\subset[d]$ a weak $(t,r)$-design. Then Trevisan's Extractor
\begin{equation}
    \text{Ext}_c:\{0,1\}^n\times\{0,1\}^d\rightarrow\{0,1\}^m
\end{equation}
is a quantum-proof $(k+rm+\log\frac{1}{\epsilon},3m\sqrt{\epsilon})$-strong extractor.
\end{theorem}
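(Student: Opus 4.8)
The plan is to prove the theorem by the \emph{reconstruction paradigm} of De, Portmann, Vidick and Renner, arguing in the contrapositive. Suppose $\text{Ext}_c$ is \emph{not} a $(k+rm+\log\tfrac1\epsilon,\,3m\sqrt\epsilon)$-strong extractor, i.e. there is a cq-state $\rho_{XE}$ with $H_{\min}(X|E)\ge k+rm+\log\tfrac1\epsilon$ and a uniform seed $Y$ on $\{0,1\}^d$ independent of $\rho_{XE}$ such that $\tfrac12\big\|\rho_{\text{Ext}_c(X,Y)YE}-\rho_{Um}\otimes\rho_Y\otimes\rho_E\big\|>3m\sqrt\epsilon$. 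The goal is to contradict the hypothesis that the one-bit extractor $C$ is quantum-proof $(k,\epsilon)$-strong.

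First I would run a hybrid argument over the $m$ output bits $C(X,Y_{R_1}),\dots,C(X,Y_{R_m})$: let the $i$-th hybrid consist of the first $i$ true output bits followed by $m-i$ fresh uniform bits, kept alongside $Y$ and $E$. The two extreme hybrids are more than $3m\sqrt\epsilon$ apart in trace distance, so by the triangle inequality some consecutive pair is more than $3\sqrt\epsilon$ apart; equivalently, for some index $i\in[m]$ the bit $C(X,Y_{R_i})$ is distinguishable from uniform with advantage exceeding $3\sqrt\epsilon$ given the earlier output bits $C(X,Y_{R_1}),\dots,C(X,Y_{R_{i-1}})$, the whole seed $Y$, and the quantum register $E$.

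The weak $(t,r)$-design now enters. Fix the seed coordinates outside $R_i$ to a value that preserves the distinguishing advantage (possible by averaging). With $Y_{[d]\setminus R_i}$ frozen, each earlier bit $C(X,Y_{R_j})$ with $j<i$ depends on the remaining seed $Y_{R_i}\in\{0,1\}^t$ only through the overlap coordinates $R_i\cap R_j$, hence is a function of $Y_{R_i}$ taking at most $2^{|R_i\cap R_j|}$ values; tabulating all $i-1$ of these functions costs at most $\sum_{j<i}2^{|R_i\cap R_j|}\le rm$ bits by property~2 of the design. Adjoining the $O(\log\tfrac1\epsilon)$ further bits needed by the reconstruction step, one obtains a \emph{classical} advice string $A=A(X)$ of length $rm+\log\tfrac1\epsilon$. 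By the chain rule for conditional min-entropy, $H_{\min}(X|EA)\ge H_{\min}(X|E)-(rm+\log\tfrac1\epsilon)\ge k$. But the distinguisher from the hybrid step shows that $C(X,Y_{R_i})$ is more than $\epsilon$-far from uniform given $(Y_{R_i},E,A)$, so $C$ cannot be a $(k,\epsilon)$-strong extractor --- the desired contradiction, which establishes the stated parameters.

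The classical inputs to this argument --- the existence of good weak $(t,r)$-designs with small $d$ and $r$, and of list-decodable codes realizing $C$ --- are standard and I would simply cite them. The real obstacle, and the crux of the De--Portmann--Vidick--Renner result, is executing the hybrid-and-reconstruct step when $E$ is a \emph{quantum} system: the distinguisher is a measurement rather than a function, the earlier output bits must be made available as advice without a disturbance that breaks the reduction, and the passage from a quantum single-bit distinguisher back to a violation of $C$'s extractor property is exactly where a square root is lost, degrading the error from the classical $\sim m\epsilon$ to $3m\sqrt\epsilon$. I would carry out this step using their quantum reconstruction (alternating-extraction) lemma; everything else in the skeleton above goes through as in the classical proof of Trevisan's extractor.
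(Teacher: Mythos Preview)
The paper does not prove this theorem; it is quoted from \cite{DPVR12} and stated without proof in the supplemental material. Your sketch is essentially the argument of that reference: hybrid over the $m$ output bits, freeze the seed outside $R_i$, use the weak-design bound $\sum_{j<i}2^{|R_i\cap R_j|}\le rm$ to tabulate the earlier bits as classical advice, drop $rm+\log(1/\epsilon)$ from the min-entropy via the chain rule, and then invoke the quantum one-bit reconstruction lemma to contradict the $(k,\epsilon)$-strong property of $C$.

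One small terminological correction: the step where the $\sqrt{\epsilon}$ loss appears is not an ``alternating-extraction'' lemma. In \cite{DPVR12} the loss comes from their quantum reconstruction result (their Lemma~4.2/Theorem~4.3), which converts a trace-distance distinguisher for a single output bit into a predictor using an operator-inequality/gentle-measurement argument; alternating extraction is a different technique from the multi-source-extractor literature. Apart from this naming issue, your outline matches the actual proof.
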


They also gave different concrete constructions of quantum-proof extractors building on the above theorem, depending on whether one wants to maximise the output length, minimize the seed length etc. We highlight two specific constructions below, which can be used in the first step of the protocol depending on the min-entropy of the source and the requirement on the output and seed lengths.  
\begin{theorem}(Corollary 5.4 of \cite{DPVR12}).
\label{thm-first-step-1}
There exist a function $\text{Ext}_{\text{T}}:\{0,1\}^n\times\{0,1\}^d\rightarrow\{0,1\}^m$ which is a quantum-proof $(m+4\log\frac{1}{\epsilon}+O(1),\epsilon)$-strong extractor with uniform seed of length $d=O(\log^2(\frac{n}{\epsilon})\log m)$.
\end{theorem}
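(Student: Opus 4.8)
The plan is to instantiate the generic Trevisan-extractor theorem of \cite{DPVR12} quoted above with two explicit ingredients: a one-bit quantum-proof strong extractor built from a list-decodable binary code, and a weak design with a short ground set. Once these are in place the corollary is obtained purely by tracking parameters, because the substantive content --- that the quantum side information $E$ does not help predict the output bits $C(x,y_{R_1}),\dots,C(x,y_{R_m})$ even though the seed segments $y_{R_i}$ overlap --- is already delivered by that theorem (it is precisely the reconstruction-paradigm analysis with quantum side information carried out in \cite{DPVR12}).

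First I would fix the one-bit extractor $C:\{0,1\}^{n}\times\{0,1\}^{t}\to\{0,1\}$: let $C(x,y)$ be the $y$-th coordinate of the encoding of $x$ under a binary code that is list-decodable up to radius close to $1/2$ with polynomial list size (for instance a Reed--Solomon code concatenated with the Hadamard code). For a target one-bit error $\epsilon'$ this gives a quantum-proof $(k_C,\epsilon')$-strong extractor with seed length $t=O(\log(n/\epsilon'))$ and min-entropy threshold $k_C=\log(1/\epsilon')+O(1)$, the quantum-proof property of this building block being one of the lemmas of \cite{DPVR12}. Next I would supply the combinatorial ingredient: by the weak-design construction of Raz, Reingold and Vadhan \cite{RRV99}, for every $m$ and $t$ there is a weak $(t,1)$-design $R_1,\dots,R_m\subseteq[d]$ with $d=O(t^{2}\log m)$. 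The point here is that the overlap parameter is $r=1$ --- this is exactly why one works with \emph{weak} designs rather than ordinary ones --- so that the term $rm$ in the generic theorem is just $m$ up to an additive constant, not a constant multiple of $m$.

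Finally I would combine and choose parameters. Plugging $C$ and the weak design into the theorem above yields, for any $\epsilon'$, a quantum-proof $\bigl(k_C+m+\log(1/\epsilon'),\,3m\sqrt{\epsilon'}\bigr)$-strong extractor $\text{Ext}_C:\{0,1\}^{n}\times\{0,1\}^{d}\to\{0,1\}^{m}$. Setting $\epsilon':=\bigl(\epsilon/(3m)\bigr)^{2}$ makes the output error exactly $\epsilon$; the threshold becomes $k_C+\log(1/\epsilon')+m=2\log(1/\epsilon')+m+O(1)=m+4\log(1/\epsilon)+O(1)$ once the lower-order $\log m$ contribution from $\epsilon'=\Theta(\epsilon^{2}/m^{2})$ is absorbed in the way done in \cite{DPVR12}; and the seed length is $d=O(t^{2}\log m)=O\bigl(\log^{2}(n/\epsilon')\log m\bigr)=O\bigl(\log^{2}(n/\epsilon)\log m\bigr)$, using that for a non-trivial extractor $m\le k\le n$ so $\log(1/\epsilon')=O(\log(n/\epsilon))$.

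I expect the only real obstacle --- granting the quantum analysis that the cited theorem already abstracts away --- to be the parameter bookkeeping: one must verify that the error-reduction step $3m\sqrt{\epsilon'}=\epsilon$ costs only an additive $2\log(1/\epsilon)$ (plus a lower-order $\log m$ term) in the entropy threshold and only polylogarithmic factors in the seed length, and that the weak $(t,1)$-design genuinely achieves $r=1$, since if $r$ were a constant strictly greater than $1$ the leading term of the threshold would be $rm$ rather than the desired $m$.
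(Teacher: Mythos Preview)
The paper does not prove this statement at all: it is quoted verbatim as Corollary~5.4 of \cite{DPVR12} and used as a black box, with no proof or proof sketch given in the present paper. Your proposal is therefore not to be compared against a proof in this paper but against the original argument in \cite{DPVR12}, and in that respect your outline is accurate: the corollary there is obtained exactly by instantiating the generic Trevisan theorem with (i) the list-decoding one-bit extractor and (ii) the Raz--Reingold--Vadhan weak $(t,1)$-design, followed by the error-rebalancing $\epsilon'=(\epsilon/(3m))^{2}$ and the parameter bookkeeping you describe. The one point to be careful about is the one you flag yourself: the additive $O(\log m)$ term coming from $\log(1/\epsilon')=2\log(1/\epsilon)+O(\log m)$ is genuinely there, and in \cite{DPVR12} it is absorbed into the stated asymptotics (since $m\le n$ one may fold $\log m$ into the seed-length estimate, and for the entropy threshold the $O(1)$ in the corollary is understood in the regime where $\log m$ is dominated by $\log(1/\epsilon)$); your sketch handles this correctly.
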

For $\epsilon = \text{poly}(\frac{1}{n})$ in this construction, the seed has length $d = O(\log^3 n)$. Given a source $X$ that has smooth min-entropy $H_{\min}^{\epsilon'}(X|E) \geq k$ conditioned on adversarial side information $E$, the extractor in Thm. \ref{thm-first-step-1} produces $m = k - 4 \log(\frac{1}{\epsilon}) - O(1)$ bits of randomness with an error $\epsilon+2\epsilon'$. 
\begin{theorem}(Corollary 5.6 of \cite{DPVR12}).
\label{thm-first-step-2}
If for constant $0 < \alpha \leq 1$, the source has min-entropy $H_{\min}(X|E) = n^{\alpha}$, then for any $0 < \gamma < \alpha$ there exists an explicit function $\text{Ext}_{\text{T}}: \{0,1\}^n \times \{0,1\}^d \rightarrow \{0,1\}^m$ which is a quantum-proof $(k, \epsilon)$-strong extractor with $k = n^{\gamma} m + 8 \log(m/\epsilon) + O(1)$, $d = O(\frac{1}{\gamma}\log n)$ and $\epsilon = n^{-\Omega(1)}$. 
\end{theorem}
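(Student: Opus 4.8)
The plan is to derive Theorem~\ref{thm-first-step-2} as a special case of the quantum-proof Trevisan composition theorem of \cite{DPVR12} quoted above, by feeding it two well-chosen ingredients: a one-bit list-decodable-code extractor with only logarithmic seed length, and a \emph{block} weak design whose overlap parameter $r$ we push all the way up to $\approx n^{\gamma}$. The point of raising $r$ is that the dimension $d$ of a weak $(t,r)$-design shrinks like $t^2/\log r$, so a polynomially large $r$ drives $d$ down to $O(\tfrac1\gamma\log n)$, at the affordable cost of an additive $rm\le n^{\gamma}m$ in the min-entropy threshold.

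Ingredient 1 (the one-bit extractor): take $C\colon\{0,1\}^n\times\{0,1\}^t\to\{0,1\}$ to be the quantum-proof $(k_C,\epsilon_C)$-strong one-bit extractor that \cite{DPVR12} build from a Reed--Muller list-decodable code; it has seed length $t=O(\log(n/\epsilon_C))$ and min-entropy threshold $k_C=O(\log(1/\epsilon_C))$. Any additive $O(\log n)$ one might prefer to keep in $k_C$ or $t$ is harmless below, because we will choose $\epsilon_C$ polynomially small in $n$, so that $\log(1/\epsilon_C)=\Theta(\log n)$ absorbs it. Ingredient 2 (the weak design): use the block weak design of \cite{DPVR12}; for every integer $r\ge 2$ there is an efficiently computable weak $(t,r)$-design $R_1,\dots,R_m\subseteq[d]$ with $d=O(t^2/\log r)$, valid whenever $m$ is at most polynomial in $n$, which holds in our application since $m\le n^{\alpha-\gamma}$.

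Now assemble. By the composition theorem, $\mathrm{Ext}_C\colon\{0,1\}^n\times\{0,1\}^d\to\{0,1\}^m$ is a quantum-proof $\bigl(k_C+rm+\log(1/\epsilon_C),\ 3m\sqrt{\epsilon_C}\bigr)$-strong extractor. Choose $\epsilon_C:=(\epsilon/(3m))^2$, so the output error is exactly $3m\sqrt{\epsilon_C}=\epsilon$; since the hypothesis stipulates $\epsilon=n^{-\Omega(1)}$ (polynomially small), we get $\log(1/\epsilon_C)=2\log(3m/\epsilon)=\Theta(\log n)$ and hence $t=O(\log n)$. Choose $r:=2^{\lfloor\gamma\log n\rfloor}\le n^{\gamma}$, so $\log r=\Theta(\gamma\log n)$ and therefore $d=O(t^2/\log r)=O\bigl(\log^2 n/(\gamma\log n)\bigr)=O(\tfrac1\gamma\log n)$. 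Finally the threshold is $k=k_C+rm+\log(1/\epsilon_C)\le n^{\gamma}m+O(\log(m/\epsilon))$, and bookkeeping the constants hidden in $k_C$ and in $\log(1/\epsilon_C)=2\log(3m/\epsilon)$ yields $k\le n^{\gamma}m+8\log(m/\epsilon)+O(1)$. Explicitness is inherited from the explicit list-decodable code with its polynomial-time list decoder and from the explicit block weak design, so $\mathrm{Ext}_C=\mathrm{Ext}_{\mathrm{T}}$ is computable in polynomial time, giving the claimed $(k,\epsilon)$-strong extractor.

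The delicate step is the three-way balancing at the end. Shrinking $d$ to $O(\tfrac1\gamma\log n)$ imposes two competing requirements: $t$ must be $O(\log n)$, which is precisely why $\epsilon_C$ (hence $\epsilon$) cannot be taken sub-polynomially small, and this is exactly where the hypothesis $\epsilon=n^{-\Omega(1)}$ is consumed; while at the same time $\log r$ must be $\Omega(\gamma\log n)$, yet $r$ cannot grow faster without blowing the $rm$ term past $n^{\gamma}m$. One also has to keep $k=n^{\gamma}m+o(n^{\gamma}m)$ below the source entropy $n^{\alpha}$, which is what caps the output length at $m\le n^{\alpha-\gamma}$ (matching the $m=n^{\alpha-\gamma}-o(1)$ used in the Proposition that follows). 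No new quantum reasoning is required: quantum-proofness is carried entirely by the composition theorem, which is the substantive content of \cite{DPVR12}.
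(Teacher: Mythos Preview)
The paper does not supply its own proof of this statement: Theorem~\ref{thm-first-step-2} is simply quoted as Corollary~5.6 of \cite{DPVR12} and used as a black box, with only the subsequent remark that for constant $\gamma$ one gets $d=O(\log n)$ and $m=n^{\alpha-\gamma}-o(1)$. So there is nothing in the present paper to compare your argument against.

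That said, your reconstruction is essentially the argument that \cite{DPVR12} themselves give: plug a logarithmic-seed one-bit extractor (from a list-decodable code) and a weak $(t,r)$-design with $d=O(t^2/\log r)$ into the Trevisan composition theorem, then push $r$ up to $n^{\gamma}$ so that $d=O(t^2/(\gamma\log n))=O(\tfrac{1}{\gamma}\log n)$ while the overlap cost $rm$ becomes the $n^{\gamma}m$ term in $k$. One small terminological wrinkle: the design you need here is the ordinary Nisan--Wigderson style weak design with large overlap parameter $r$, not the ``block'' weak design (the latter is the construction \cite{DPVR12} use to drive $r$ \emph{down} towards $1$, in the opposite regime). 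The parameter scaling $d=O(t^2/\log r)$ you rely on is nonetheless correct for the standard design, so the argument goes through; the precise constant $8$ in front of $\log(m/\epsilon)$ comes from the specific one-bit extractor in \cite{DPVR12} and is not something you can recover from the generic bookkeeping you wrote, but that does not affect the asymptotic statement.
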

Choosing $\gamma$ to be a constant gives the seed length to be $d = O(\log n)$. Given a source $X$ that has smooth min-entropy $H_{\min}^{\epsilon'}(X|E) \geq n^{\alpha}$ for $0 < \alpha \leq 1$, the extractor in Thm. \ref{thm-first-step-2} produces $m = n^{\alpha - \gamma} - o(1)$ bits of randomness with an error $\epsilon + 2 \epsilon'$.  

We also use the notion of \textit{independent-source extractors} that extract randomness from two independent weak sources rather than using a fully uniform seed. 
\begin{definition}
Let $n_1, n_2, m \in \mathbb{N}$, $0 \leq k_1 \leq n_1$, $0 \leq k_2 \leq n_2$ and $\epsilon \in [0,1]$. A function $\text{Ext:} \; \{0,1\}^{n_1} \times \{0,1\}^{n_2} \rightarrow \{0,1\}^m$ is said to be a two-source extractor strong in the $j$-th input (for $j \in \{1,2\}$), if for independent $\mathtt{X}_1, \mathtt{X}_2$ with $H_{\min}(\mathtt{X}_1) \geq k_1$ and $H_{\min}(\mathtt{X}_2) \geq k_2$ it holds that
\begin{equation}
 \frac{1}{2} \big\| \text{Ext}(\mathtt{X}_1,\mathtt{X}_2) - U_m \circ \mathtt{X}_j \big\| \leq\epsilon,
\end{equation}
where $U_m$ denotes the uniform random variable on $m$-bit strings. 
\end{definition}
As the final step of the protocol, we will use an independent-source extractor that is based on the construction by Raz \cite{Raz05} and proven to be quantum-proof in the Markov model in \cite{FPS16}. 
\begin{definition}
A ccq-state $\rho_{C_1C_2Q}$ belongs to the Markov model if $C_1 \leftrightarrow Q \leftrightarrow C_2$ forms a Markov chain, i.e., $I(C_1:C_2|Q)_{\rho} = 0$ where $I(C_1:C_2|Q)_{\rho} = H(C_1Q)_{\rho} + H(C_2Q)_{\rho} - H(C_1C_2Q)_{\rho} - H(Q)_{\rho}$ denotes the conditional mutual information with $H(Q)_{\rho} = - \text{Tr}\left[\rho_Q \log \rho_Q\right]$. 
\end{definition}
\begin{definition}
Let $n_1, n_2, m \in \mathbb{N}$, $0 \leq k_1 \leq n_1$, $0 \leq k_2 \leq n_2$ and $\epsilon \in [0,1]$. A function $\text{Ext:} \; \{0,1\}^{n_1} \times \{0,1\}^{n_2} \rightarrow \{0,1\}^m$ is said to be a quantum-proof two-source extractor strong in the $j$-th input (for $j \in \{1,2\}$), if for Markov source $\rho_{\mathtt{X}_1 \mathtt{X}_2Q}$ with $H_{\min}(\mathtt{X}_1 | Q) \geq k_1$ and $H_{\min}(\mathtt{X}_2|Q) \geq k_2$ it holds that
\begin{equation}
 \frac{1}{2} \big\| \rho_{\text{Ext}(\mathtt{X}_1,\mathtt{X}_2)X_jQ} - \rho_{U_m} \otimes \rho_{\mathtt{X}_jQ} \big\| \leq\epsilon,
\end{equation}
where $\rho_{U_m}$ denotes the fully mixed state of dimension $2^m$. 
\end{definition}
The following Theorem was proven in \cite{FPS16} and applied to the extractor constructed by Raz \cite{Raz05}.
\begin{theorem}(\cite{FPS16}).
Any $(k_1,k_2,\epsilon)$-strong two source extractor is a $(k_1+\log \frac{1}{\epsilon},k_2+\log \frac{1}{\epsilon},\sqrt{3\epsilon\cdot2^{M-2} })$-strong quantum proof extractor in the Markov model where $M$ is the output length of the extractor.
\end{theorem}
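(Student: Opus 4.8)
The plan is to prove this as a generic ``classical-proof $\Rightarrow$ quantum-proof'' lifting that is made possible by the Markov condition, keeping track of two distinct sources of loss: a multiplicative $\sqrt{2^{M}}$ from a square-root norm-conversion step, and a $\log(1/\epsilon)$ entropic slack from discarding low-entropy ``bad'' parts of a decomposition. Start from an arbitrary Markov ccq-state $\rho_{\mathtt{X}_1\mathtt{X}_2 Q}$ with $H_{\min}(\mathtt{X}_1|Q)\ge k_1+\log(1/\epsilon)$ and $H_{\min}(\mathtt{X}_2|Q)\ge k_2+\log(1/\epsilon)$. The first step is structural: since $I(\mathtt{X}_1:\mathtt{X}_2|Q)_{\rho}=0$, the structure theorem for quantum Markov states gives $\mathcal{H}_Q\cong\bigoplus_j \mathcal{H}_{Q^{L}_j}\otimes\mathcal{H}_{Q^{R}_j}$ with $\rho_{\mathtt{X}_1\mathtt{X}_2 Q}=\bigoplus_j \mu_j\,\rho^{j}_{\mathtt{X}_1 Q^{L}_j}\otimes\rho^{j}_{\mathtt{X}_2 Q^{R}_j}$; adjoining the classical register $J$ that labels the block reduces the problem to a convex combination of product states in which $\mathtt{X}_1$ is correlated only with $Q^{L}_j$ and $\mathtt{X}_2$ only with $Q^{R}_j$.

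The second step is an entropy split over the blocks. Orthogonality of the blocks gives $p_{\mathrm{guess}}(\mathtt{X}_1|Q)=\sum_j\mu_j\,p^{j}_{\mathrm{guess}}(\mathtt{X}_1|Q^{L}_j)$, so $H_{\min}(\mathtt{X}_1|Q)\ge k_1+\log(1/\epsilon)$ forces the total $\mu$-weight of blocks with $H^{j}_{\min}(\mathtt{X}_1|Q^{L}_j)<k_1$ to be strictly less than $\epsilon$, and symmetrically for $\mathtt{X}_2$ with $k_2$. After deleting a bad set of $\mu$-weight $<2\epsilon$, every surviving block carries a product state with $H^{j}_{\min}(\mathtt{X}_1|Q^{L}_j)\ge k_1$ and $H^{j}_{\min}(\mathtt{X}_2|Q^{R}_j)\ge k_2$; this is precisely where the two $\log(1/\epsilon)$ shifts are spent.

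The third step is the crux. Fix a good block. By the product structure the system $Q^{R}_j$ decouples: conditioned on any value of $\mathtt{X}_2$ the triple $(Z,\mathtt{X}_1,Q^{L}_j)$ is independent of $Q^{R}_j$ for $Z=\mathrm{Ext}(\mathtt{X}_1,\mathtt{X}_2)$, so it suffices to bound $\tfrac12\sum_{x_2}p_{x_2}\|\rho^{j}_{Z Q^{L}_j| x_2}-\rho_{U_M}\otimes\rho^{j}_{Q^{L}_j| x_2}\|$. On this I would use a Renner-type square-root (weighted two-norm) inequality $\|\Delta\|_1\le\sqrt{2^{M}}\,\bigl\|\sigma^{-1/4}\Delta\,\sigma^{-1/4}\bigr\|_2$, taking $\sigma$ on $Q^{L}_j$ to be the reference state furnished by $H^{j}_{\min}(\mathtt{X}_1|Q^{L}_j)\ge k_1$, that is, by $\rho^{j}_{\mathtt{X}_1 Q^{L}_j}\le 2^{-k_1}\,\mathbb{I}_{\mathtt{X}_1}\otimes\sigma$. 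The weighted two-norm then reduces to a collision-type quantity that is controlled entirely by classical data -- the near-uniformity of $Z$ given $\mathtt{X}_2$ that the classical $(k_1,k_2,\epsilon)$-strong extractor already certifies, using $H_{\min}(\mathtt{X}_1)\ge H^{j}_{\min}(\mathtt{X}_1|Q^{L}_j)\ge k_1$ -- and the $\dim Q$ factor cancels against the weight $\sigma$, leaving a per-block bound of order $\sqrt{\epsilon\,2^{M-2}}$ that is independent of the dimension of the adversary's system.

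Finally I reassemble: the overall distance is at most $\sum_{j\ \mathrm{good}}\mu_j\cdot(\text{per-block bound})+\mu(\text{bad set})$, and Cauchy--Schwarz in $j$ (using $\sum_j\mu_j\le1$) collapses the good-block sum so that the two bad-block $\epsilon$'s and the intrinsic extractor $\epsilon$ all sit under a single square root, which produces the constant $3$ in $\sqrt{3\epsilon\,2^{M-2}}$. I expect the square-root/norm-conversion step to be the main obstacle: one has to choose $\sigma$ so that the $\dim Q$ dependence genuinely cancels, show that in a product block the relevant $H_2$-type quantity is simultaneously tamed by the conditional min-entropy on the $\mathtt{X}_1$-side and by the classically-certified output uniformity, and carry the factor $1/4$ from the $\tfrac12\|\cdot\|$ normalization carefully so as to land exactly on $3\epsilon\,2^{M-2}$ rather than on a cruder $2^{M}\epsilon$-type estimate. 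By comparison, checking that a block-conditional quantum min-entropy bound implies the plain classical hypothesis needed to invoke the extractor is the only genuinely routine point.
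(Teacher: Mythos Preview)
The paper does not supply its own proof of this statement: it is quoted as a result of \cite{FPS16} and used as a black box in the subsequent extraction step. Your proposal is essentially a sketch of the argument in \cite{FPS16} itself---the Markov structure theorem to reduce to a direct sum of product blocks, a Markov-inequality split over blocks to spend the two $\log(1/\epsilon)$ shifts, and a square-root/collision-entropy bound on each good block to produce the $\sqrt{2^{M}}$ penalty---so there is nothing in the present paper to compare it against beyond noting that your outline matches the cited source.
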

\begin{theorem}(\cite{FPS16}).
\label{thm-second-ext}
For any $n_1, n_2, k'_1, k'_2, m$ and any $0 < \delta' < 19/32$ such that
\begin{eqnarray}
n_1 &\geq & 6 \log n_1 + 2 \log n_2, \nonumber \\
k'_1 &\geq & \left(\frac{1}{2} + \delta' \right) n_1 + 3 \log n_1 + \log n_2, \nonumber \\
k'_2 &\geq & \frac{163}{32} \log \left(\left(1 + \frac{3\delta'}{19} \right) n_1 - k'_1 \right), \nonumber \\
m &\leq & \frac{16 \delta'}{19} \min \left[\frac{n_1}{8}, \frac{4k'_2}{163} \right] -1,
\end{eqnarray}
there exists an explicit function $\text{Ext}_{\text{R}}: \{0,1\}^{n_1} \times \{0,1\}^{n_2} \rightarrow \{0,1\}^m$ that is a quantum-proof $\left(k'_1, k'_2, \epsilon' \right)$-two-source extractor strong in both inputs (separately) with $\epsilon' = \frac{\sqrt{3}}{2} 2^{-m/4}$. 
\end{theorem}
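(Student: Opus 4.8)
The plan is to obtain the statement by composing the explicit \emph{classical} two-source extractor of Raz \cite{Raz05} with the classical-to-quantum lifting in the Markov model (the theorem of \cite{FPS16} quoted immediately above), and then to chase the parameters so that the numerical constants $163/32$ and $16\delta'/19$ and the ceiling $\delta' < 19/32$ fall out of the arithmetic.

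First I would invoke Raz's construction: there is an explicit map $\text{Ext}_{\text{R}}: \{0,1\}^{n_1} \times \{0,1\}^{n_2} \rightarrow \{0,1\}^m$ which, for any $n_1, n_2, \kappa_1, \kappa_2, m$ and any $0 < \delta_{\text{R}} < 1/2$ obeying Raz's numerical conditions
\begin{equation}
n_1 \geq 6 \log n_1 + 2 \log n_2, \qquad \kappa_1 \geq \left(\tfrac{1}{2} + \delta_{\text{R}}\right) n_1 + 3 \log n_1 + \log n_2, \qquad \kappa_2 \geq 5 \log(n_1 - \kappa_1), \qquad m \leq \delta_{\text{R}} \min\!\left[\tfrac{n_1}{8}, \tfrac{\kappa_2}{40}\right] - 1,
\end{equation}
is a $(\kappa_1, \kappa_2, \epsilon_{\text{cl}})$-two-source extractor, strong in both inputs separately, with classical error $\epsilon_{\text{cl}} = 2^{-3m/2}$; this is a purely classical fact which I would simply cite. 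Then I would apply the \cite{FPS16} lifting theorem stated above with $\epsilon = \epsilon_{\text{cl}} = 2^{-3m/2}$: it produces a quantum-proof two-source extractor in the Markov model, still strong in both inputs, with shifted entropy thresholds $\kappa_1 + \log\tfrac{1}{\epsilon_{\text{cl}}} = \kappa_1 + \tfrac{3m}{2}$ and $\kappa_2 + \tfrac{3m}{2}$, and with error
\begin{equation}
\sqrt{3\, \epsilon_{\text{cl}}\, 2^{m-2}} = \sqrt{3 \cdot 2^{-3m/2} \cdot 2^{m-2}} = \sqrt{3 \cdot 2^{-m/2 - 2}} = \tfrac{\sqrt{3}}{2}\, 2^{-m/4},
\end{equation}
which is exactly the claimed $\epsilon'$. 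Thus the error claim is a one-line substitution; the real content lies in matching the entropy thresholds.

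For that I would set $k'_1 := \kappa_1 + \tfrac{3m}{2}$ and $k'_2 := \kappa_2 + \tfrac{3m}{2}$, so that $\kappa_1 = k'_1 - \tfrac{3m}{2}$ and $n_1 - \kappa_1 = (n_1 - k'_1) + \tfrac{3m}{2}$. The hypothesis $m \leq \tfrac{16\delta'}{19} \cdot \tfrac{n_1}{8} - 1$ gives $\tfrac{3m}{2} < \tfrac{3\delta'}{19} n_1$, hence $n_1 - \kappa_1 \leq \left(1 + \tfrac{3\delta'}{19}\right) n_1 - k'_1$, and subtracting $\tfrac{3m}{2}$ from the hypothesis $k'_1 \geq \left(\tfrac12 + \delta'\right) n_1 + 3\log n_1 + \log n_2$ yields $\kappa_1 \geq \left(\tfrac12 + \tfrac{16\delta'}{19}\right) n_1 + 3\log n_1 + \log n_2$, i.e. Raz's first-source condition with $\delta_{\text{R}} = \tfrac{16\delta'}{19}$. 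For the second source, the hypothesis $m \leq \tfrac{16\delta'}{19} \cdot \tfrac{4k'_2}{163} - 1$ lets one bound $\tfrac{3m}{2}$ by a $\delta'$-dependent fraction of $k'_2$, after which the hypothesis $k'_2 \geq \tfrac{163}{32}\log\!\left((1 + \tfrac{3\delta'}{19}) n_1 - k'_1\right)$ is seen to be just strong enough that $\kappa_2 = k'_2 - \tfrac{3m}{2} \geq 5 \log\!\left((1 + \tfrac{3\delta'}{19}) n_1 - k'_1\right) \geq 5\log(n_1 - \kappa_1)$, and the same bound makes the displayed $m$-constraint imply Raz's $m \leq \delta_{\text{R}} \min[n_1/8, \kappa_2/40] - 1$; the condition $n_1 \geq 6\log n_1 + 2\log n_2$ is passed through verbatim. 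With all of Raz's hypotheses now verified for $(\kappa_1, \kappa_2, m, \delta_{\text{R}})$, the lift from the previous step delivers the asserted quantum-proof $(k'_1, k'_2, \epsilon')$ two-source extractor, strong in both inputs.

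The main obstacle is the circular dependence in the last step: Raz's upper bound on the output length $m$ is expressed through $\kappa_2$, but $\kappa_2 = k'_2 - \tfrac{3m}{2}$ itself depends on $m$. Resolving it reduces to solving a linear inequality tying together $m$, $k'_2$ and $\delta'$ (and an analogous one for the second-source entropy threshold), and it is precisely the solvability of these inequalities that pins down the constants $163/32$ and $16\delta'/19$ and the threshold $\delta' < 19/32$: these are not arbitrary but mark the exact boundary at which the $\tfrac{3m}{2}$ entropy penalty incurred by the lift can still be absorbed into the slack left by Raz's construction.
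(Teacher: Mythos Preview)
Your proposal is correct and follows exactly the route the paper intends: the paper does not give a self-contained proof of this theorem but simply cites it from \cite{FPS16}, noting that it is obtained by applying the classical-to-quantum lifting theorem (stated immediately above) to Raz's extractor \cite{Raz05}. Your write-up supplies precisely the parameter-chasing that this application entails, and your arithmetic is right: the error computation $\sqrt{3\cdot 2^{-3m/2}\cdot 2^{m-2}} = \tfrac{\sqrt{3}}{2}2^{-m/4}$ is the key one-liner, and the identifications $\delta_{\text{R}} = \tfrac{16\delta'}{19}$ (forcing $\delta' < 19/32$ from $\delta_{\text{R}} < 1/2$), $\tfrac{163}{32} = 5\cdot\tfrac{163}{160}$ (absorbing the $\tfrac{3m}{2}$ penalty on $\kappa_2$), and $\tfrac{4}{163} = \tfrac{1}{40}\cdot\tfrac{160}{163}$ (same penalty in the output-length bound) are exactly how the constants arise.
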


\subsection{Entropy Accumulation}
\label{sec:Entropy-Acc}

The Entropy Accumulation Theorem (EAT) is an information-theoretic tool to bound the total conditional smooth min entropy $H_{\min}^{\gamma}\left( \mathcal{A}^N\mathcal{B}^N|\mathcal{X}^N\mathcal{Y}^N E \Lambda\right)$ for $\gamma\in(0,1)$ under the condition that the protocol did not abort.
In order to apply the theorem, one must first verify that for the Block Min-Entropy Source and the Device $D=(D_1,D_2)$, the sequential procedure of the protocol fulfils the requirement of the Entropy Accumulation Theorem. One must then devise a min-tradeoff function that quantifies the entropy accumulated in a single run of the protocol for given observed violation of the Bell inequality $\widetilde{G}$. After constructing this function, known techniques are used to derive a bound on $H_{\min}^{\gamma}\left( \mathcal{A}^N\mathcal{B}^N|\mathcal{X}^N\mathcal{Y}^N E \Lambda \right)$. For $N = 2^d \cdot (m'/2)$ rounds of the protocol in which $m'/2$ rounds are run with (close-to-)uniform inputs, the Entropy Accumulation statement gives $H_{\min}^{\gamma}\in\Omega(m')$ which is optimal. In the following, we recall the definitions of the EAT channels, Min-tradeoff function, and the EAT theorem from \cite{ADF+18, DOR16, KAF17}. 
\begin{definition}(EAT Channels).
The EAT Channels $\mathcal{EAT}_j: R_{j-1} : R_jA_jB_jX_jY_jD_j$ for $j \in [M]$ are completely positive trace-preserving (CPTP) maps such that 
\begin{enumerate}
\item $A_jB_jX_jY_jD_j$ are finite-dimensional classical random variables ($A_j B_j$ are measurement outcomes, $X_jY_j$ are measurement inputs, $D_j$ is a random variable evaluating the winning condition), $R_j$ are arbitrary quantum registers (holding information about the quantum state at the $j$-th round). 

\item For any input state $\rho_{R_{j-1}R'}$ where $R'$ is a register isomorphic to $R_{j-1}$, the classical value $D_j$ (the Bell indicator value at the $j$-th round) can be measured from the marginal $\rho_{A_jB_jX_jY_j}$ (the classical random variables at the round) of the output state $\rho_{R_jA_jB_jX_jY_jD_jR'} = \left(\mathcal{EAT}_j \otimes \mathcal{I}_{R'} \right)\left(\rho_{R_{j-1}R'} \right)$ without changing the state. 

\item For any initial state $\rho_{R_0E\Lambda}$, the final state $\rho_{\mathcal{A}^N\mathcal{B}^N\mathcal{X}^N\mathcal{Y}^N\mathcal{D}^NE\Lambda} = \left(\left( \text{Tr}_{R_N} \circ \mathcal{EAT}_{N} \circ \ldots \mathcal{EAT}_{1} \right) \otimes \mathcal{I}_E \right) \rho_{R_0E\Lambda}$ satisfies the Markov condition $\mathcal{A}^{l-1}\mathcal{B}^{l-1} \leftrightarrow \mathcal{X}^{l-1}\mathcal{Y}^{l-1}E\Lambda \leftrightarrow X_lY_l$ for each $l \in [N]$. Here $\mathcal{A}^{l-1} = A_1,A_2,\ldots, A_{l-1}$ and similarly for the other random variables. The condition states that the weak source does not change its state depending on previous outputs from the device so that future inputs do not reveal any new information about previous outputs. 
\end{enumerate}
\end{definition}
To use entropy accumulation, we need to verify that the protocol evolves the states using EAT channels. 
Denoting $l = (j,k_j)$, we have quantum registers $Q^{A}_{l}, Q^{B}_{l}$ holding the quantum states of the device for the $k_j$-th run within the $j$-th block, classical registers $X_{l}, Y_{l}, A_{l}, B_{l}$ for the inputs and outputs of the device and $D_{l}$ evaluating the outcome in the test. Let us denote by $\mathcal{C}_l$ the channels that evolve the states $\rho_{Q^{A}_{l-1}Q^{B}_{l-1}}$ to $\rho_{Q^{A}_{l}Q^{B}_{l}X_{l}Y_{l}A_{l}B_{l}D_{l}}$. We see that the channels $\mathcal{C}_l$ are indeed EAT channels.
\begin{enumerate}
\item The input-output $X_{l},Y_{l}, A_l, B_l$ are finite dimensional classical systems, and the $Q^{A}_{l}Q^{B}_{l}$ are quantum registers.

\item The $D_l$ is a classical function of the input-output registers $X_{l},Y_{l}, A_l, B_l$. 

\item The Markov chain condition $I\left(\mathcal{A}^{l-1} \mathcal{B}^{l-1}: X_l Y_l | \mathcal{X}^{l-1} \mathcal{Y}^{l-1} E \Lambda \right) = 0$ is satisfied by assumption. 

\end{enumerate}

Now, let us denote $\perp$ as the event of aborting the protocol and $\overline{\perp}$ as the complementary event of not aborting the protocol,
\begin{equation}
\overline{\perp} := \big\{ \wedge_{j=1}^{2^d} \left( L_{m'}^j \geq \delta \right) \big\}.
\end{equation}
Let $\rho = \rho^{\mathcal{A}^N\mathcal{B}^N\mathcal{X}^N\mathcal{Y}^N\mathcal{D}^NE\Lambda}$ denote the joint state of the devices held by the honest parties and Eve at the end of the $N := 2^d \cdot (m'/2)$ runs of the protocol. Let $\rho_{|\overline{\perp}}$ denote the state conditioned on the event of not aborting the protocol. 
Using the statement shown in \cite{KAF17} we have the following.
\begin{proposition}
\label{prop:rand-acc-2}
Let $\rho = \rho^{\mathcal{A}^N\mathcal{B}^N\mathcal{X}^N\mathcal{Y}^N\mathcal{D}^NE\Lambda}$ denote the joint state of the devices held by the honest parties and Eve at the end of the $N$ runs of the protocol, $\overline{\perp}$ denote the event of not aborting and let $\rho_{|\overline{\perp}}$ denote the state conditioned on the event of not aborting the protocol. Then for any $\epsilon_{EA}, \gamma \in (0,1)$, either the protocol aborts with probability greater than $1 - \epsilon_{EA}$ or it holds that
\begin{equation}
\label{eq:entropy-bound}
H_{\min}^{\gamma}\left(\mathcal{A}^N\mathcal{B}^N|\mathcal{X}^N\mathcal{Y}^NE\Lambda \right)_{\rho_{|\overline{\perp}}} \geq (m'/2) \cdot g\left(\gamma, \epsilon_{EA}, \delta, m'/2, \epsilon \right) - O(\log(1/\gamma)),
\end{equation}
where
\begin{equation}
g\left(\gamma, \epsilon_{EA}, \delta, m'/2, \epsilon \right) := \max_{0 < s_t < (1/4 - \epsilon^2)^2 \frac{\sqrt{2}-1}{2}} \left[f_{\min}\left( \delta, s_t \right) - \sqrt{\frac{2}{m'}} 2 \left( \log 9 + a(s_t) (1/2+\epsilon)^2 \right) \sqrt{1 - 2 \log{(\gamma \cdot \epsilon_{EA})}} \right],
\end{equation}
with 
\[ f_{\min}(p, s_t) =\begin{cases} 
      \alpha_{\epsilon}(p) & S_{\epsilon}(p) \leq s_t \\
      a(s_t) S_{\epsilon}(p) + b(s_t) & S_{\epsilon}(p) > s_t
   \end{cases}
\] 
and
\begin{equation}
a(s_t) = \frac{d}{d S_{\epsilon}(p)} \alpha_{\epsilon}(p) \bigg\vert_{S_{\epsilon}(p) = s_t}, \qquad b(s_t) = \alpha_{\epsilon}(s_t) - a(s_t) \cdot s_t,
\end{equation}
and
\begin{eqnarray}
S_{\epsilon}(p) &=& (1/2 - \epsilon)^2 p(1) - (1/2 + \epsilon)^2 p(-1), 
\end{eqnarray}
\[ g_{\epsilon}(p) =\begin{cases} 
      1 - h\left[\frac{1}{2} + \frac{1}{(1/4- \epsilon^2)^2} \sqrt{S_{\epsilon}(p)\left( S_{\epsilon}(p) + (1/4- \epsilon^2)^2 \right)} \right] & \frac{S_{\epsilon}(p)}{(1/4- \epsilon^2)^2} \in \left[0, \frac{\sqrt{2}-1}{2} \right) \\
      1 & \frac{S_{\epsilon}(p)}{(1/4- \epsilon^2)^2} \in \left[ \frac{\sqrt{2}-1}{2}, 1 \right]
   \end{cases}
\] 
\end{proposition}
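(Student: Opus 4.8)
The plan is to derive the entropy bound from the Entropy Accumulation Theorem (EAT) \cite{ADF+18, DOR16, KAF17}, applied to the $m'/2$ rounds lying inside the ``good'' block $j^*$ --- the block singled out by the quantum-somewhere-random property, for which $S'_{j^*}$, and hence the string of input pairs used in that block, is $\epsilon$-close to uniform with respect to $E$ --- while bounding the entropy produced in each of the other $2^d-1$ blocks from below by zero. First I would fix the index $j^*$ guaranteed by the somewhere-random property and observe that, since $\mathcal{A}^N\mathcal{B}^N$ is a classical string, a guessing-probability (data-processing) argument shows that $H_{\min}^{\gamma}$ of the whole output string conditioned on $\mathcal{X}^N\mathcal{Y}^N E\Lambda$ is at least $H_{\min}^{\gamma}$ of the block-$j^*$ outputs conditioned on $\mathcal{X}^N\mathcal{Y}^N E\Lambda$ together with the outputs of all the other blocks; hence it suffices to lower bound the entropy accumulated in the good block, treating the inputs and outputs of all the other blocks as an enlarged classical side-information register.

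Second, I would verify that the maps $\mathcal{C}_{(j^*,1)},\dots,\mathcal{C}_{(j^*,m'/2)}$ that evolve the device state across the good block are EAT channels relative to this enlarged side information: the registers $A_l,B_l,X_l,Y_l,D_l$ are finite-dimensional and classical, the MDL-Hardy indicator $D_l$ is a deterministic function of $(A_l,B_l,X_l,Y_l)$ and is therefore readable from the marginal without disturbing it, and the Markov condition $\mathcal{A}^{l-1}\mathcal{B}^{l-1}\leftrightarrow\mathcal{X}^{l-1}\mathcal{Y}^{l-1}E\Lambda\leftrightarrow X_lY_l$ is exactly Assumption~\ref{assum:Markov}. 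Carrying out this check while conditioning on the inputs and outputs of the remaining blocks --- all of which are deterministic functions of the single string $\mathtt{X}_1$, and thus a priori correlated with everything in the protocol --- is the delicate point, and it is here that the Markov assumption does real work; I would handle it exactly as in \cite{KAF17}, processing the $N$ rounds in their sequential order.

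Third, I would construct the min-tradeoff function from the single-round bound of \cite{KAF17}: whenever the true value of the MDL-Hardy parameter in round $l$ is positive, $H(A_lB_l|X_lY_lE\Lambda)\ge g_\epsilon(p)$, which depends on $p$ only through the score $S_\epsilon(p)=(1/2-\epsilon)^2p(1)-(1/2+\epsilon)^2p(-1)$ of the Bell indicator $D_l\in\{1,-1\}$. Because the EAT second-order term grows with the gradient and sup-norm of the min-tradeoff function, and $g_\epsilon$ is steep near $S_\epsilon=0$, I would not use $g_\epsilon$ itself but, for a free threshold parameter $s_t$, the piecewise function $f_{\min}(\cdot,s_t)$ that coincides with $g_\epsilon$ (written as $\alpha_\epsilon$) for scores $\le s_t$ and with its tangent line $a(s_t)S_\epsilon+b(s_t)$ for scores $>s_t$: since $g_\epsilon$ is convex on the relevant interval $0<S_\epsilon<(1/4-\epsilon^2)^2(\sqrt{2}-1)/2$ (so that the tangent at $s_t$ lies below it), this $f_{\min}(\cdot,s_t)$ is a convex, non-decreasing lower bound on the achievable single-round entropy, hence a legitimate min-tradeoff function, whose gradient is capped at $a(s_t)$ and whose relevant norm and variance quantities are controlled by $\log 9$ (from the size of the classical alphabet) and by $a(s_t)(1/2+\epsilon)^2$. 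For rounds outside block $j^*$ the inputs may be fully deterministic, so no Bell-based bound is available and the per-round entropy is bounded only by $0$ --- which is all that the reduction in the first step requires.

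Finally, I would invoke the EAT. Conditioned on not aborting, the test passes on the good block, so the empirical average of the MDL-Hardy parameter over its $m'/2$ rounds is at least $\delta$; combining the EAT with the standard estimate that the chance of passing this test while the empirical score-distribution is incompatible with an average $\ge\delta$ is at most $\epsilon_{EA}$, one obtains that either the protocol aborts with probability $>1-\epsilon_{EA}$ or
\[
H_{\min}^{\gamma}\!\left(\mathcal{A}_{j^*}\mathcal{B}_{j^*}\,|\,\cdots\right)_{\rho_{|\overline{\perp}}}
\ \ge\ \frac{m'}{2}\,f_{\min}(\delta,s_t)\ -\ \sqrt{\tfrac{m'}{2}}\;2\!\left(\log 9 + a(s_t)(1/2+\epsilon)^2\right)\sqrt{1-2\log(\gamma\epsilon_{EA})}\ -\ O\!\left(\log\tfrac{1}{\gamma}\right),
\]
where monotonicity of $f_{\min}$ in the score was used to replace the empirical average by $\delta$. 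Pulling out a factor $m'/2$ from the first two terms (using $\sqrt{m'/2}=(m'/2)\sqrt{2/m'}$) and maximizing over $s_t$ in the admissible range yields exactly $(m'/2)\,g(\gamma,\epsilon_{EA},\delta,m'/2,\epsilon)-O(\log(1/\gamma))$. The step I expect to be the main obstacle is precisely this block-structure bookkeeping in the EAT reduction --- rigorously checking that the EAT-channel and Markov conditions persist once one conditions on the other blocks' inputs and outputs, which are all generated from the one string $\mathtt{X}_1$ --- together with reproducing the exact constants ($\log 9$, the $(1/2+\epsilon)^2$ factor, and the residual $O(\log(1/\gamma))$ smoothing term) from \cite{KAF17}.
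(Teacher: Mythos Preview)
Your overall strategy --- isolate the good block $j^*$, invoke the EAT with the min-tradeoff function of \cite{KAF17} on its $m'/2$ rounds, and bound all other blocks by $0$ --- matches the paper's. The reduction step, however, is different, and this is where your proposal has a genuine gap.

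You reduce $H_{\min}^{\gamma}(\mathcal{A}^N\mathcal{B}^N|\mathcal{X}^N\mathcal{Y}^NE\Lambda)$ to $H_{\min}^{\gamma}(\mathcal{A}_{j^*}\mathcal{B}_{j^*}|\mathcal{X}^N\mathcal{Y}^NE\Lambda,\mathcal{A}_{\neq j^*}\mathcal{B}_{\neq j^*})$ by data processing and then propose to run the EAT on block $j^*$ while conditioning on the inputs \emph{and outputs} of all the other blocks. The data-processing inequality is fine, but the EAT step is not: the outputs of blocks $j>j^*$ are generated \emph{after} block $j^*$ finishes and depend on the post-block-$j^*$ quantum register of the device, so they cannot be absorbed into the fixed initial side-information register $E\Lambda$ required by the EAT. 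Assumption~\ref{assum:Markov} controls future \emph{inputs}, not future \emph{outputs}, and your parenthetical remark that the other blocks' outputs ``are all generated from the one string $\mathtt{X}_1$'' is incorrect --- they depend on the device's measurement outcomes. Verifying the Markov-chain hypothesis of the EAT with future outputs sitting in the conditioning register is precisely what fails, and the paper does not attempt it.

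The paper instead works in the other direction. It applies the EAT (via \cite{KAF17}) to obtain a bound on $H_{\min}^{\gamma/4}(\mathcal{A}^{l^*}\mathcal{B}^{l^*}|\mathcal{X}^{l^*}\mathcal{Y}^{l^*}E\Lambda)$ with $l^*=(m'/2)j^*$, treating only the \emph{inputs} of earlier blocks as extra classical side information; then uses Assumption~\ref{assum:Markov} to enlarge the conditioning from $\mathcal{X}^{l^*}\mathcal{Y}^{l^*}$ to $\mathcal{X}^N\mathcal{Y}^N$ (future inputs reveal nothing new about past outputs); and finally invokes the smooth-min-entropy chain rule of \cite{DBWR14,VDTR13},
\[
H_{\min}^{\gamma}(\mathcal{A}^N\mathcal{B}^N|\cdots)\ \ge\ H_{\min}^{\gamma/4}(\mathcal{A}^{l^*}\mathcal{B}^{l^*}|\cdots)+H_{\min}^{\gamma/4}(\overline{\mathcal{A}^{l^*}}\,\overline{\mathcal{B}^{l^*}}|\cdots)-\log\!\Big(\tfrac{1}{1-\sqrt{1-\gamma^2/16}}\Big),
\]
bounding the second term by $0$. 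The $O(\log(1/\gamma))$ in the statement is exactly this chain-rule penalty, not a residual EAT smoothing term as you suggest. So the missing ingredient in your plan is this chain rule: it is what lets one keep the later-block outputs on the entropy side rather than in the conditioning, thereby sidestepping the future-output issue entirely.
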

The proof follows from the statement shown in \cite{KAF17} where it was shown that for a single block $j^*$ with inputs $\epsilon$-close-to-uniform, the lower bound in the Lemma holds for $H_{\min}^{\gamma'}\left(\mathcal{A}^{l^*}\mathcal{B}^{l^*}|\mathcal{X}^{l^*}\mathcal{Y}^{l^*}E\Lambda\right)$ where $l^* = (m'/2)\cdot j^*$ denotes the final run within the $j^*$-th block, and we choose $\gamma' = \gamma/4$ for $\gamma \in (0,1)$. Specifically, we use the statement from \cite{KAF17} treating the inputs from the previous blocks $\mathcal{X}^{l^*-1}\mathcal{Y}^{l^*-1}$ as part of the adversarial classical side information. Using the Markov assumption \ref{assum:Markov} that $I\left(\mathcal{A}^{l^*}\mathcal{B}^{l^*}; X_{l^*+1}Y_{l^*+1}|\mathcal{X}^{l^*}\mathcal{Y}^{l^*}E\Lambda\right) = 0$ meaning that $I\left(\mathcal{A}^{l^*}\mathcal{B}^{l^*}; \overline{X}^{l^*}\overline{Y}^{l^*}|\mathcal{X}^{l^*}\mathcal{Y}^{l^*}E\Lambda\right) = 0$ where $\overline{X}^{l^*} = \mathcal{X}^N\setminus X^{l^*} = X_{l^*+1} \ldots X_N$, the lower bound holds also for $H_{\min}^{\gamma'}\left(\mathcal{A}^{l^*}\mathcal{B}^{l^*}|\mathcal{X}^{N}\mathcal{Y}^NE\Lambda\right)$. Finally, the bound can be extended to $H_{\min}^{\gamma}\left(\mathcal{A}^N\mathcal{B}^N|\mathcal{X}^N\mathcal{Y}^NE\Lambda \right)$ using the chain rule for smooth min-entropy \cite{DBWR14, VDTR13},
\begin{equation}
H_{\min}^{\gamma}\left(\mathcal{A}^N\mathcal{B}^N|\mathcal{X}^N\mathcal{Y}^NE\Lambda \right)_{\rho_{|\overline{\perp}}} \geq H_{\min}^{\gamma/4}\left(\mathcal{A}^{l^*}\mathcal{B}^{l^*}|\mathcal{X}^N\mathcal{Y}^NE\Lambda \right)_{\rho_{|\overline{\perp}}} + H_{\min}^{\gamma/4}\left(\overline{\mathcal{A}^{l^*}} \overline{\mathcal{B}^{l^*}}|\mathcal{X}^N\mathcal{Y}^NE\Lambda \right)_{\rho_{|\overline{\perp}}} - O\left(\log(1/\gamma) \right),
\end{equation}
where $\overline{\mathcal{A}^{l^*}} = \mathcal{A}^N \setminus \mathcal{A}^{l^*}$ denotes the outputs in the blocks after $j^*$ and the last term is specifically $\log\left(\frac{1}{1 - \sqrt{1 - \gamma^2/16}}\right)$. We now bound the second term by $0$ since the inputs to these blocks may not be random, to obtain Eq. \ref{eq:entropy-bound}.

\subsection{An alternative Bell test with improved parameters}
After the first step of the protocol, we have $(2^d,m,\epsilon)$-quantum-somewhere-random source. The second step in the protocol is to use this source as the input for $\big\lfloor 2^d \cdot m/m' \big \rfloor$ rounds of a suitably designed Bell test with $m'$-bit inputs per round. 

Here, we will illustrate the protocol using the $3$-player GHZ game which achieves improved parameters over the $2$-player MDL-Hardy game. 
In the $3$-player GHZ game, each player receive a single bit input ($x,y,z\in\{1,2\}$) and produces a single bit output ($a,b,c\in\{0,1\}$). 

For the correlators defined as
\begin{equation}
    \langle A_X B_Y C_Z\rangle := \sum_{\substack{a,b,c \\ a \oplus b \oplus c = 0}} P_{A,B,C|X,Y,Z}(a,b,c|x,y,z)-\sum_{\substack{a,b,c \\ a \oplus b \oplus c = 1}} P_{A,B,C|X,Y,Z}(a,b,c|x,y,z),
\end{equation}
in the GHZ-Mermin game, the goal of the players is to maximize the Mermin expression $M=\langle A_1B_1C_1\rangle-\langle A_1B_2C_2\rangle-\langle A_2B_1C_2\rangle-\langle A_2B_2C_1\rangle$. It is well known that $M\leq2$ holds for classical theories (Local Hidden Variable models) while Quantum theory achieves the maximum (algebraic) value of $M=4$. The optimal quantum strategy (to achieve $M=4$) is for the players to measure $A_1=B_1=C_1=\sigma_x$ and $A_2 = B_2 = C_2 = \sigma_y$ on the GHZ state $|\phi_{\text{GHZ}} \rangle=\frac{1}{\sqrt{2}}(|000\rangle+|111\rangle)$.

In \cite{WBA18}, Woodhead et al. derived the guessing probability for the 3-player GHZ game as 
\begin{equation}
    P_{g}(A_1|E)\leq f(M)
\end{equation}
where $P_{g}(A_1|E)$ denotes any quantum adversary Eve's probability of guessing Alice's measurement outcome for measurement $A_1$, with $f$ being the function
\begin{equation}
f(M)=
\left\{\begin{array}{cc}
\frac{1}{2}+\frac{1}{2} \sqrt{M(1-\frac{M}{4})} \quad &\text{for}\quad M\geq2+\sqrt{2} \\
1+\frac{1}{\sqrt{2}}-\frac{M}{4} \quad &\text{for} \quad  M\leq 2+\sqrt{2}
\end{array}
\right.
\end{equation}
for the range $2\sqrt{2}\leq M \leq 4$. Similar bounds hold for the local guessing probabilities $P_{g}(B_y|E),P_{g}(C_z|E)$ and $P_{g}(A_x|E)$. Note that $P_{g}(A_1|E)$ is given by 
\begin{equation}
    P_{g}(A_1|E)=\max_{\{M_E^a\}_a} \bigg|\sum_a p_a \text{Tr}(M_E^a\rho_E^a) \bigg|
\end{equation}
where the maximisation is over all POVMs ${\{M_E^a\}_a}$ on the eavesdropper's quantum system $E$. The conditional min-entropy $H_{\min}(A_x|E)$ is related to the guessing probability as 
\begin{equation}
    H_{\min}(A_1|E)=-\log_2 P_{g}(A_1|E)
\end{equation}
The conditional min-entropy is a lower bound for the conditional von Neumann entropy $H(A_1|E)$. 



When we use the $(2^d, m, \epsilon)$-quantum-somewhere-random source to choose inputs to the GHZ game, the individual bits $x, y, z$ are $\epsilon$-away from uniform for each run of the GHZ Bell test. We therefore test the Measurement-Dependent-Locality version of the GHZ game, where we evaluate the parameter $\bar{M}_{\epsilon}$ defined as
\begin{equation}
\bar{M}_{\epsilon} := \sum_{a,b,c,x,y,z} \nu(x,y,z) M_{\epsilon}(a,b,c,x,y,z) P_{A,B,C|X,Y,Z}(a,b,c|x,y,z),
\end{equation}
where $M_{\epsilon}$ is an indicator taking values given by
\[ M_{\epsilon}(a,b,c,x,y,z) := \begin{cases} 
      \left(1/2-\epsilon \right)^3 & \left[(x,y,z) = (1,1,1) \right] \wedge \left[a \oplus b \oplus c = 0 \right] \\
    -\left(1/2 + \epsilon \right)^3 & \left[(x,y,z) \in \{ (1,2,2), (2,1,2), (2,2,1) \} \right] \wedge \left[a \oplus b \oplus c = 0 \right] \\
      0 & \text{otherwise}
   \end{cases}
\]
and $\nu(x,y,z)$ is the probability distribution of the bits $x,y,z$ being chosen as the inputs from the somewhere-random source. We have that
\begin{equation}
\left(1/2-\epsilon \right)^3 \leq \nu(x,y,z) \leq \left(1/2 + \epsilon \right)^3 \quad \forall (x,y,z).
\end{equation}
The maximum value of $\bar{M}_{\epsilon}$ over classical theories (local hidden variable strategies) is seen to be $0$, while the corresponding maximum value in quantum theory is $\left(1/2 - \epsilon \right)^3 \nu(1,1,1) \geq \left(1/2 - \epsilon \right)^6$. Note that
\begin{eqnarray}
\bar{M}_{\epsilon} &\leq & \left(1/2 - \epsilon \right)^3 \left(1/2 + \epsilon \right)^3 \bigg[ \sum_{\substack{a,b,c \\ a \oplus b \oplus c = 0}} \bigg(P_{A,B,C|X,Y,Z}(a,b,c|1,1,1) - P_{A,B,C|X,Y,Z}(a,b,c|1,2,2) - \nonumber \\ && \qquad \qquad \qquad \qquad \qquad \qquad \qquad P_{A,B,C|X,Y,Z}(a,b,c|2,1,2)- P_{A,B,C|X,Y,Z}(a,b,c|2,2,1) \bigg) \bigg] \nonumber \\
&\leq &  \left(1/4 - \epsilon^2 \right)^3 (M-2)/2.
\end{eqnarray}
So that $M \geq \frac{2 \bar{M}_{\epsilon}}{\left(1/4 - \epsilon^2 \right)^3} + 2$. 
In a real experiment with noise, suppose that the value achieved is $\bar{M}_{\epsilon} \geq \delta$ for some constant $\delta > 0$. We then obtain that 
\begin{equation}
\label{eq:GHZ-guess}
P_{g}(A_1 | E) \leq f\left(\frac{2 \delta}{\left(1/4 - \epsilon^2 \right)^3} + 2 \right),
\end{equation}
where we have used the fact that $f(M)$ is monotonically non-increasing for $2 \leq M \leq 4$ (it is monotonically decreasing for $2 + \sqrt{2} \leq M \leq 4$). Using the EAT we can similarly obtain a lower bound on $H_{\min}^{\gamma}\left(\mathcal{A}^N|\mathcal{X}^N\mathcal{Y}^NE\Lambda\right)_{\rho_{|\overline{\perp}}}$ as $(m'/3) f_{\min}(\delta) - c \sqrt{m'/3}$ where $c   = 2 \left(\log 5 + \big\lceil \| \nabla f_{\min} \|_{\infty} \big \rceil \right) \sqrt{1 - 2 \log \left(\gamma \cdot \epsilon_{EA}\right)}$ where $f_{\min}(\delta)$ is approximately $-\log_2 P_g(A_1|E)$ from Eq.(\ref{eq:GHZ-guess}) (for $\delta$ bigger than a suitable threshold). Other candidate Bell inequalities, in particular the ones from \cite{SSKA21, WBC22, ZRLH22} need to be investigated in future for better yield, robustness and feasibility. Further improvements in parameters can be achieved following the techniques for reducing the extractor error in \cite{RRV99} and will be pursued in future work.

\subsection{Estimation of the Bell-MDL parameter}
\label{sec:MDL-Estimation}
In the protocol, we use the $(2^d, m', \epsilon)$-quantum-somewhere-random source to choose the inputs for $\lfloor 2^d \cdot m'/2 \rfloor$ rounds of the MDL-Hardy test. In other words, supposing that $m'$ is even, we use the bits $S_j = \text{Ext}(X,j)$ for $j = 1, \ldots, 2^d$ to choose the inputs in $m'/2$ sequential runs of the test. At the end of the procedure, the parties estimate the MDL-Hardy parameter $\bar{M}_{\epsilon}^{j,k_j}$ for each of the $k_j \in [m'/2]$ runs within each of the $j \in [2^d]$ sequential blocks. They abort the protocol unless $L_{m'}^j = \frac{2}{m'} \sum_{k_j=1}^{m'/2} \bar{M}^{j,k_j}_{\epsilon} \geq \delta$ for each of the $2^d$ blocks, for some constant $\delta > 0$. We use the following Lemma based on the Azuma-Hoeffding inequality to estimate the average value of the parameter $\bar{M}_{\epsilon}$ for the conditional boxes over all runs of the protocol. 
\begin{lemma}
Consider arbitrary random variables $W_i$ for $i = 0,1,...,m'/2$, and binary random variables $B_i$ for $i = 1,...,m'/2$ that are functions of $W_i$, i.e. $B_i = f_i(W_i)$ for some functions $f_i$. Denote the conditional means as $\overline{B}_i = \mathbb{E}(B_i | W_{i-1}, \ldots, W_1, W_0)$ for $i = 1,\ldots, m'/2$. Define for $k = 1, \ldots, m'/2$, the empirical average 
\begin{equation}
L_k = \frac{1}{k} \sum_{i=1}^k B_i,
\end{equation}
and the empirical average of conditional means
\begin{equation}
\overline{L}_k = \frac{1}{k} \sum_{i=1}^k \overline{B}_i.
\end{equation}
Then it holds that
\begin{equation}
\text{Pr}\left[|L_{m'} - \overline{L}_{m'}| \geq \delta_{Az} \right] \leq 2 \exp{\left(-m' \delta_{Az}^2/4\right)}.
\end{equation}
\end{lemma}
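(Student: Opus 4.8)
The plan is to recognize $L_{m'} - \overline{L}_{m'}$ as a rescaled bounded-increment martingale and conclude by the Azuma--Hoeffding inequality. First I would pass to the natural filtration $\mathcal{F}_i := \sigma(W_0, W_1, \ldots, W_i)$ for $i = 0, 1, \ldots, m'/2$, and set $Z_i := B_i - \overline{B}_i$. Since $B_i = f_i(W_i)$ is $\mathcal{F}_i$-measurable while $\overline{B}_i = \mathbb{E}[B_i \mid W_{i-1}, \ldots, W_0] = \mathbb{E}[B_i \mid \mathcal{F}_{i-1}]$ is $\mathcal{F}_{i-1}$-measurable, each $Z_i$ is $\mathcal{F}_i$-measurable with $\mathbb{E}[Z_i \mid \mathcal{F}_{i-1}] = 0$. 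Hence the partial sums $S_k := \sum_{i=1}^{k} Z_i$ (with $S_0 = 0$) form a martingale adapted to $(\mathcal{F}_k)_k$. I would stress that no independence among the $W_i$ is needed here, only this adaptedness — which is exactly what the martingale reformulation provides, and why the statement can allow the $W_i$ to be arbitrary.

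Next I would bound the increments. Since $B_i \in \{0,1\}$ we have $\overline{B}_i \in [0,1]$, so $|Z_i| = |B_i - \overline{B}_i| \leq 1$ almost surely; the martingale therefore has increments bounded by $c_i = 1$. Applying Azuma--Hoeffding in the form $\Pr[\,|S_k| \geq t\,] \leq 2\exp\!\left(-t^2 / (2\sum_{i=1}^{k} c_i^2)\right)$ with $k = m'/2$ gives $\Pr[\,|S_{m'/2}| \geq t\,] \leq 2\exp(-t^2/m')$ for every $t > 0$.

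Finally I would unwind the scaling: since $L_{m'} - \overline{L}_{m'}$ denotes the empirical quantity at the end of the block of $m'/2$ runs, $L_{m'} - \overline{L}_{m'} = \frac{1}{m'/2}\sum_{i=1}^{m'/2}(B_i - \overline{B}_i) = \frac{2}{m'}\, S_{m'/2}$, so the event $\{|L_{m'} - \overline{L}_{m'}| \geq \delta_{Az}\}$ coincides with $\{|S_{m'/2}| \geq (m'/2)\,\delta_{Az}\}$. Substituting $t = (m'/2)\,\delta_{Az}$ into the bound above yields $\Pr[\,|L_{m'} - \overline{L}_{m'}| \geq \delta_{Az}\,] \leq 2\exp\!\left(-(m'/2)^2\delta_{Az}^2 / m'\right) = 2\exp(-m'\delta_{Az}^2/4)$, which is the claim. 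There is no genuine obstacle here; the only points requiring care are bookkeeping — tracking the factor of two arising because the block length is $m'/2$ whereas the displayed averages carry the subscript $m'$ — and invoking the absolute-value version of Azuma--Hoeffding (with the factor $2$ in the exponent's denominator, legitimate because $|Z_i|\le 1$) rather than the wider range-based version. If a sharper, variance-sensitive estimate were ever desired one could instead invoke Freedman's martingale tail bound \cite{Fre75}, but the crude increment bound already suffices for the stated inequality.
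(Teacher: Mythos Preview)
Your proof is correct and is exactly the argument the paper intends: the lemma is stated there without proof, merely introduced as ``based on the Azuma--Hoeffding inequality,'' and your martingale-difference computation with $Z_i=B_i-\overline{B}_i$, $|Z_i|\le 1$, and the rescaling $L_{m'}-\overline{L}_{m'}=\tfrac{2}{m'}S_{m'/2}$ is the standard way to unpack that. Your bookkeeping of the $m'$ versus $m'/2$ subscript is also the right reading of the paper's notation.
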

The above Lemma states that with high probability (specifically with probability $1-2 \exp^{-m' \delta^2/16}$), the arithmetic average of the conditional boxes in the $j$-th block satisfies $\overline{L}^{j}_{m'} = \frac{2}{m} \sum_{k_j=1}^{m/2} \overline{M}^{j,k_j}_{\epsilon} \geq \delta/2$ when the observed value satisfies $L^j_{m'} = \frac{2}{m} \sum_{k_j=1}^{m/2} M^{j,k_j}_{\epsilon}  \geq \delta$. We also state the following useful fact.
\begin{lemma}
If the arithmetic average satisfies $\overline{L}_{m'}^j =  \frac{2}{m'} \sum_{k_j=1}^{m'/2} \overline{M}^{j,k_j}_{\epsilon} \geq \delta/2$ for constant $\delta > 0$, with $\overline{M}^{j,k_j}_{\epsilon} \leq 1/2$ for every $k = 1, \ldots, m'/2$, then in at least $ \frac{m'(\delta - 2\kappa)}{2(1 - 2 \kappa)}$ positions $k$, we have that $\overline{M}^{j,k_j}_{\epsilon} \geq \kappa$ for $0 < \kappa < \delta/2$. 
\end{lemma}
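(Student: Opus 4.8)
The plan is to prove this by a straightforward reverse-Markov (double-counting) argument applied to the $m'/2$ quantities $a_k := \overline{M}^{j,k}_{\epsilon}$ for $k = 1,\ldots,m'/2$. By hypothesis these satisfy $a_k \le 1/2$ and $\frac{2}{m'}\sum_{k} a_k \ge \delta/2$, which I rewrite as $\sum_k a_k \ge \tfrac{m'\delta}{4}$. Introduce the ``good'' index set $G := \{ k : a_k \ge \kappa \}$ and set $g := |G|$; the entire content of the lemma is a lower bound on $g$.

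Next I would split $\sum_k a_k = \sum_{k\in G} a_k + \sum_{k\notin G} a_k$ and bound each piece from above. Every term with $k\in G$ contributes at most $1/2$ by the hypothesis $a_k\le 1/2$, so $\sum_{k\in G} a_k \le g/2$; every term with $k\notin G$ contributes strictly less than $\kappa$ by definition of $G$, so $\sum_{k\notin G} a_k < (m'/2 - g)\kappa$. Combining these with the lower bound $\sum_k a_k \ge \tfrac{m'\delta}{4}$ gives
\begin{equation}
\frac{m'\delta}{4} < \frac{g}{2} + \left(\frac{m'}{2} - g\right)\kappa = g\left(\frac{1}{2} - \kappa\right) + \frac{m'\kappa}{2}.
\end{equation}
Rearranging yields $g\left(\tfrac12 - \kappa\right) > \tfrac{m'(\delta - 2\kappa)}{4}$, and since $0 < \kappa < \delta/2 \le 1/2$ (the last inequality holding because the average of the $a_k$ is at most $1/2$) the factor $\tfrac12 - \kappa$ is strictly positive, so dividing through gives $g > \frac{m'(\delta - 2\kappa)}{2(1 - 2\kappa)}$, which is the assertion.

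There is essentially no hard step here: the argument is elementary counting, and the only points that merit a line of care are (i) the individual conditional expectations $a_k$ may a priori be negative, which is harmless because the inequality $\sum_{k\notin G} a_k < (m'/2-g)\kappa$ is valid term by term irrespective of sign, and (ii) verifying that the denominator $1-2\kappa$ is strictly positive, which follows from $\kappa<\delta/2$ together with $\delta\le 1$. To obtain the non-strict ``at least'' phrasing in the statement it then suffices to observe that $g$ is an integer while we have proved a strict inequality.
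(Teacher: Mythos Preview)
Your proposal is correct and is essentially the same double-counting argument as in the paper: define the ``good'' set of indices where $\overline{M}^{j,k_j}_{\epsilon}\ge\kappa$, split the sum, bound the good part by $\tfrac12|I|$ and the bad part by $\kappa(m'/2-|I|)$, then solve for $|I|$. Your extra remarks on the possible negativity of the $a_k$ and on the positivity of $1-2\kappa$ are welcome clarifications but do not alter the route.
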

\begin{proof}
Let $\overline{L}_{m'}^j =  \frac{2}{m'} \sum_{k_j=1}^{m'/2} \overline{M}^{j,k_j}_{\epsilon} \geq \delta/2$ for constant $\delta > 0$, with $\overline{M}^{j,k_j}_{\epsilon} \leq 1/2$ for every $k = 1, \ldots, m'/2$. Consider the set $I := \big\{ k_j | \overline{M}^{j,k_j}_{\epsilon} \geq \kappa \big\}$. Then we have that
\begin{eqnarray}
\sum_{k_j \in I} \overline{M}^{j,k_j}_{\epsilon}+ \sum_{k_j \notin I} \overline{M}^{j,k_j}_{\epsilon} &\geq & \frac{m' \delta}{4} \nonumber \\
\implies \frac{1}{2} |I| + \kappa (m'/2 - |I|) &\geq & \frac{m' \delta}{4} \nonumber \\
\implies |I|  \geq  \frac{m'(\delta - 2\kappa)}{2(1 - 2 \kappa)}.
\end{eqnarray}
The two Lemmas above together show that when the observed value satisfies $L_m^j = \frac{2}{m'} \sum_{k_j=1}^{m;/2} M^{j,k_j}_{\epsilon}  \geq \delta$, then in at least $ \frac{m'(\delta - 2\kappa)}{2(1 - 2 \kappa)}$ out of the $m'/2$ runs within the $j$-th block, the true MDL-Hardy parameter satisfies $\overline{M}^{j,k_j}_{\epsilon} \geq \kappa$. In particular, choosing $\kappa = \delta/4$ we obtain that in at least $\frac{m' \delta}{2(2 - \delta)}$ runs it holds that $\overline{M}^{j,k_j}_{\epsilon} \geq \delta/4$ with high probability.

\end{proof}

\end{document}